\theoremstyle{plain}
\newtheorem{theorem}{Theorem}%[section]
\newtheorem{lemma}{Lemma}
\newtheorem{proposition}{Proposition}
\theoremstyle{definition}
\newtheorem{definition}{Definition}%[section] % definition numbers are dependent on theorem numbers
\newtheorem{example}{Example}
\newtheorem{assumption}[definition]{Assumption} 
\newtheorem*{remark}{Remark}
\theoremstyle{plain}
\let\oldparagraph\paragraph
\renewcommand{\paragraph}[1]{\oldparagraph{#1.}}
\begin{document}
\newcommand{\val}{v}
\newcommand{\cost}{c}
\newcommand{\alloc}{x}
\newcommand{\pay}{p}
\newcommand{\util}{u}
\newcommand{\rev}{R}
\newcommand{\virtual}{\phi}
\newcommand{\supp}{{\rm supp}}
\newcommand{\quant}{q}
\newcommand{\rv}{R}
\newcommand{\concaveRv}{\bar{\rv}}
\newcommand{\virtualv}{\Phi}
\newcommand{\jointf}{\bar{f}}
\newcommand{\jointF}{\bar{F}}
\newcommand{\cl}{\mathcal{L}}
\newcommand{\hval}{\hat{\val}}
\newcommand{\cF}{\mathcal{F}}
\newcommand{\inftynorm}[1]{\left\lVert#1\right\rVert_{\infty}}
\newcommand{\opt}{{\rm OPT}}
\newcommand{\Rev}{{\rm Rev}}
\newcommand{\mech}{\mathcal{M}}
\newcommand{\bval}{\bar{\val}}
\newcommand{\bF}{\bar{F}}

\newcommand{\type}{t}
\newcommand{\BNE}{{\rm BNE}}
\newcommand{\primed}{''}
\newcommand{\vupper}{H}
\newcommand{\poly}[1]{{{\rm poly}\left(#1\right)}}
\newcommand{\order}{\tau}
\newcommand{\info}{\mathcal{E}}
\newcommand{\dprimed}{^{\dagger}}
\newcommand{\ddprimed}{^{\ddagger}}

\newcommand{\optd}{\widehat{\opt}}
\newcommand{\optt}{\widetilde{\opt}}
\newcommand{\grid}{\Psi}
\newcommand{\setsize}[1]{{\left|#1\right|}}

\newcommand{\floor}[1]{
{\lfloor {#1} \rfloor}
}
\newcommand{\bigfloor}[1]{
{\left\lfloor {#1} \right\rfloor}
}

\newcommand{\super}[1]{^{(#1)}}

%
% probability stuff.
%
\newcommand{\given}{\,\mid\,}

% resizing brackets 
\newcommand{\prob}[2][]{\text{\bf Pr}\ifthenelse{\not\equal{}{#1}}{_{#1}}{}\!\left[{\def\givenn{\middle|}#2}\right]}
\newcommand{\expect}[2][]{\text{\bf E}\ifthenelse{\not\equal{}{#1}}{_{#1}}{}\!\left[{\def\givenn{\middle|}#2}\right]}

% brackets configured with \tparen
\newcommand{\tparen}{\big}
\newcommand{\tprob}[2][]{\text{\bf Pr}\ifthenelse{\not\equal{}{#1}}{_{#1}}{}\tparen[{\def\given{\tparen|}#2}\tparen]}
\newcommand{\texpect}[2][]{\text{\bf E}\ifthenelse{\not\equal{}{#1}}{_{#1}}{}\tparen[{\def\given{\tparen|}#2}\tparen]}

% brackets do not resize.
\newcommand{\sprob}[2][]{\text{\bf Pr}\ifthenelse{\not\equal{}{#1}}{_{#1}}{}[#2]}
\newcommand{\sexpect}[2][]{\text{\bf E}\ifthenelse{\not\equal{}{#1}}{_{#1}}{}[#2]}

\newcommand{\rbr}[1]{\left(#1\right)}

\newcommand{\suchthat}{\,:\,}

\newcommand{\partialx}[2][]{{\tfrac{\partial #1}{\partial #2}}}
\newcommand{\nicepartialx}[2][]{{\nicefrac{\partial #1}{\partial #2}}}
\newcommand{\dd}{{\,\mathrm d}}
\newcommand{\ddx}[2][]{{\tfrac{\dd #1}{\dd #2}}}
\newcommand{\niceddx}[2][]{{\nicefrac{\dd #1}{\dd #2}}}
\newcommand{\grad}{\nabla}

\newcommand{\symdiff}{\triangle}
\newcommand{\abs}[1]{\left|#1\right|}
\newcommand{\indicate}[1]{{\bf 1}\left[#1\right]}

\newcommand{\reals}{\mathbb{R}}

\title{Revenue Maximization for Buyers with Costly Participation}
\author{Yannai A. Gonczarowski\thanks{Department of Economics and Department of Computer Science, Harvard University.
Email: \texttt{yannai@gonch.name}. Research carried out while at Microsoft Research New England Lab.} 
\and Nicole Immorlica\thanks{Microsoft Research New England Lab.
Email: \texttt{nicimm@gmail.com}. }
\and Yingkai Li\thanks{Cowles Foundation for Research in Economics, Yale University.
Email: \texttt{yingkai.li@yale.edu}.
Yingkai Li thanks Sloan Research Fellowship FG-2019-12378 for financial support. Research carried out while this author was an intern in Microsoft Research New England Lab and a PhD candidate at Northwestern University.
} 
\and Brendan Lucier\thanks{Microsoft Research New England Lab.
Email: \texttt{brlucier@microsoft.com}. }}
\date{}
\maketitle

\begin{abstract}
We study mechanisms for selling a single item when buyers have private costs for participating in the mechanism. An agent's participation cost can also be interpreted as an outside option value that she must forego to participate. This substantially changes the revenue maximization problem, which becomes non-convex in the presence of participation costs. For multiple buyers, we show how to construct a $(2+\epsilon)$-approximately revenue-optimal mechanism in polynomial time. Our approach makes use of a many-buyers-to-single-buyer reduction, and in the single-buyer case our mechanism improves to an FPTAS. We also bound the menu size and the sample complexity for the optimal single-buyer mechanism. Moreover, we show that posting a single price in the single-buyer case is in fact optimal under the assumption that either (1) the participation cost is independent of the value, and the value distribution has decreasing marginal revenue or monotone hazard rate; or (2) the participation cost is a concave function of the value. When there are multiple buyers, we show that sequential posted pricing guarantees a large fraction of the optimal revenue under similar conditions.

% We study mechanisms for selling a single item when buyers have private values for their outside options, which they forego by participating in the mechanism. This substantially changes the revenue maximization problem, which becomes non-convex in the presence of outside options. For multiple buyers with private outside options, we show how to construct a $(2+\epsilon)$-approximately revenue-optimal mechanism in polynomial time. Our approach makes use of a many-buyers-to-single-buyer reduction, and in the single-buyer case our mechanism improves to an FPTAS. We also bound the menu size and the sample complexity for the optimal single-buyer mechanism. Moreover, we show that posting a single price in the single-buyer case is in fact optimal under the assumption that either (1) the outside-option value is independent of the item value, and the item value distribution has decreasing marginal revenue or monotone hazard rate; or (2) the outside-option value is a concave function of the item value. When there are multiple buyers, we show that sequential posted pricing guarantees a large fraction of the optimal revenue under similar conditions.

\end{abstract}

\section{Introduction}

In the textbook example of revenue optimization, a monopolist seller wishes to sell a single item to 
a collection of buyers whose values are drawn independently from known distributions.
In this idealized economic environment, the nature of the optimal mechanism was solved by \citet{myerson1981optimal}.  
However, actual economic environments depart significantly from this ideal.  One significant assumption is that  
each buyer's cost from participation is normalized to zero.
While convenient, this assumption is not without loss: it requires that agents are perfectly indifferent between participating in an auction and leaving empty-handed, versus not attending the auction in the first place.

In this paper we relax this assumption, allowing each buyer to have a private cost for participation in the mechanism (or, equivalently, a private value for an outside option she must forgo to participate). 
Thus, a buyer's type includes both a valuation of the item for sale and a cost for participation, 
and the seller chooses a mechanism for selling the item to maximize the expected revenue given a Bayesian prior over each buyer's type. 
This is a generalization of the model in \citet{menezes2000auctions} and \citet{celik2009optimal} where the participation cost is public information. 
% The seller, given a Bayesian prior over each buyer's type, chooses a mechanism to sell the item.  

Our model can be interpreted in various ways. 
One interpretation is an oligopoly model with 
exclusive competition \citep{champsaur1989multiproduct}. 
For example, suppose there are multiple firms who wish to sell different items to a single buyer by holding auctions simultaneously in different locations, 
and the buyer can participate in at most one auction.
Then from the perspective of any one seller, if we fix the other sellers' strategies (selling mechanisms), the utility of the buyer for participating in a competing auction can be modeled as a private cost for participation.
Thus, the best response problem in this oligopoly model is 
the optimal mechanism for buyers with participation costs.
Another example is to view our model as a setting where buyers must make investment decisions for boosting their values for winning the auction \citep{gershkov2018theory}. Without investment, the buyers have no value for winning the auction.
The investments are costly for the buyers,
and the buyers must decide whether to invest before the auction starts.
In these interpretations, the previous papers \citep{champsaur1989multiproduct,gershkov2018theory} have focused on settings equivalent to public, commonly known participation costs. 
In contrast, the model that we study has private participation costs, which can be arbitrarily correlated to the value of the item.

At first glance, one may wonder whether this setting is in fact equivalent to a single-dimensional setting in which a buyer's type is simply her value minus her participation cost. 
However, such a transformation can substantially impact a buyer's choices.
Consider a single buyer with the choice between paying $\$3.1$ for getting the item for sure or paying $\$1$ for getting the item with probability $50\%$\footnote{Note that with the remaining $50\%$ probability, the buyer will not get the item and she still needs to pay the cost since she chose to participate in the mechanism.} (or not participating at all).
If the buyer has a value of $\$5$ for the item and a participation cost of $\$1$, she would strictly prefer the first option (paying $\$3.1$ and getting the item for sure, for a resulting utility of $5-3.1-1=\$0.9$).
However, if instead of having a value of $\$5$ and a participation cost of $\$1$, the buyer has a value of $5\!-\!1\!=\!\$4$ and a participation cost of $\$0$, she would strictly prefer the second option (paying $\$1$ and getting the item with probability $50\%$, for a resulting expected utility of $\frac{1}{2}\cdot4-1=\$1$). 
So there is more at play here than is conveyed by the difference between the value and the participation cost.

As we show, known results about the structure of the revenue-maximizing mechanism in the standard setting without participation costs may fail to hold in the presence of participation costs,
even in the single-buyer case.
Indeed, even when there is only one buyer,
the seller can strictly benefit by offering lotteries (see \cref{exp:lottery better} below). 
This is in sharp contrast to the idealized environment with participation costs normalized to zero, where it is always optimal to simply post a single take-it-or-leave-it price~\citep{myerson1981optimal}.  
We therefore embark on a study to adapt and extend canonical economic and computational results on revenue maximization to this generalized environment.

\subsection{Our Contributions}

\paragraph{A Characterization of Incentive Compatible Mechanisms}  We begin by providing a structural characterization of incentive compatible mechanisms in the presence of participation costs.
Recall that in the classic setting where participation costs are normalized to zero, a (direct-revelation) mechanism has each buyer $i$ declare her valuation $v_i$, then maps this profile of values into an allocation $x_i$ and payment $p_i$ for each buyer.  
In our setting, a direct-revelation mechanism takes as input a tuple $(v_i, c_i)$ of value and participation cost from each agent. 
The question is which (two-dimensional) allocation and payment rules are incentive compatible and individually rational in this environment.

For a single buyer, one way to construct a truthful mechanism is to ignore the participation cost entirely and choose a monotone allocation rule that depends only on the reported value.  The agent is then free to opt out whenever her expected utility under the allocation rule is less than her participation cost. We call such mechanisms \emph{opt-out-or-revelation mechanisms}.  
Our first result is that, in fact, \emph{every} mechanism for a single buyer is revenue-equivalent to an opt-out-or-revelation mechanism, which further allows us to therefore restrict attention to such mechanisms without loss, and incentive compatibility reduces to monotonicity of the allocation rule.  This enables us to establish a payment identity for our setting, analogous to the classic one of \citet{myerson1981optimal}, and establish a virtual-value interpretation of revenue maximization in the presence of participation costs.

We would like to extend this characterization of incentive compatibility to multiple buyers.  However, since each agent's expected utility from the mechanism can depend on which other agents participate, we must be careful when modeling participation choices.  Given a fixed allocation rule, there may be multiple equilibria of the ``participation game'' between agents choosing whether or not to opt in.\footnote{For example, consider the mechanism that chooses a participating agent at random to receive the item for free. If there are two agents, each with value $3$ and participation cost $2$, then there will be three equilibria of participation: one where only the first agent participates, one where only the second agent participates, and one where each agent participates with probability $2/3$.}  The choice of equilibrium can impact the mechanism's revenue, and hence what we mean by the revenue of a given allocation rule is potentially ambiguous.

One way to resolve this ambiguity is to encode participation choices into the mechanism protocol itself.  For example, we could imagine the mechanism approaching agents one at a time, asking them sequentially whether they would like to opt in or out (potentially after revealing information about the agents who had previously chosen to opt in).  After all participation decisions have been made, the mechanism then applies a direct-revelation allocation rule that depends only on the participating agents' reported values.  We call such mechanisms \emph{sequential opt-out-or-revelation mechanisms}.  As in the single-buyer case, we show that every mechanism (and every equilibrium of agent participation behavior) is outcome-equivalent to a sequential opt-out-or-revelation mechanism.  
This allows us to extend our characterization of incentive compatible mechanisms and our virtual-value interpretation of revenue maximization to the setting with multiple buyers.

\paragraph{Computing an Approximately Revenue-Optimal Mechanism}

Characterization in hand, we next turn to the problem of constructing an approximately revenue-optimal mechanism.  Our benchmark is the highest revenue achievable by any mechanism and any equilibrium of participation choices by the agents.  We show how to construct a mechanism that achieves a constant fraction of this benchmark, less an additive $\epsilon$ loss.  Moreover, the equilibrium of participation in our mechanism is unique up to tie-breaking.\footnote{In other words, an agent might be indifferent between participating or not participating, and either choice could be supported at equilibrium.  Such indifference occurs with vanishing probability and does not impact the mechanism's revenue.}

\begin{theorem}
\label{thm:intro.many.agent.mech}
For any $\epsilon > 0$ and any $n$ buyers with types drawn from a product distribution supported on $[0,1]^{2n}$, a mechanism with expected revenue $\frac{1}{2}\opt-\epsilon$ can be computed in time
polynomial in $n$, $1/\epsilon$, 
and either the maximum density of the participation cost in any agent's type distribution (for continuous distributions) or the maximum support size of any one agent's type distribution (for discrete distributions).
\end{theorem}

Our assumption only requires independence across different buyers. 
For each buyer, the value can be arbitrarily correlated with the participation cost.
Our construction employs a reduction from the many-buyer mechanism design problem to a single-agent problem.  Such methods have been used with great success in related settings, such as selling to agents with budget constraints and other revenue optimization problems~\citep{alaei2014bayesian,alaei2012bayesian}.  This reduction framework typically has three steps:
\begin{enumerate}
    \item Compute an approximately revenue-optimal mechanism for a single buyer, possibly subject to additional constraints (e.g., an upper bound on the ex-ante probability of selling the item).
    \item Optimize over the choice of constraints for each buyer participating in the mechanism, then construct the optimal interim allocation rule for each buyer subject to their constraints.
    \item Combine the constructed interim allocation rules into a multi-agent ex post allocation rule.
\end{enumerate}

As it turns out, implementing each of these three steps poses substantial challenges in our setting with participation costs.  We will discuss each of them in turn.

\bigskip \noindent
\textit{Step 1: Solving The Single-Buyer Problem.}  

Our first step is to construct a revenue-optimal mechanism for a single buyer.  Recall that when there is only one buyer, we can restrict attention to opt-out-or-revelation mechanisms.  For this case we provide an FPTAS algorithm for computing the optimal mechanism.

\begin{theorem}
\label{thm:intro.single.agent.mech}
For any $\epsilon > 0$ and any distribution over buyer types supported on $[0,1]^2$, a single-buyer mechanism with expected revenue $\opt-\epsilon$ can be computed in time $\poly{1/\epsilon}$.
\end{theorem}

A key challenge in this mechanism design problem is that revenue maximization is inherently non-convex in the presence of participation costs.  Recall that a given buyer type will opt out of the mechanism entirely if her expected utility drops below her participation cost.  This means that over the space of incentive compatible allocation rules, revenue can be a discontinuous and not necessarily concave function of allocation probability.\footnote{For example, suppose the agent's type $(v_i, c_i)$ is equally likely to be $(7,4)$ or $(3,1)$.  A mechanism that sells the item for sure at a price of $2.5$ would sell to the agent of type $(7,4)$, whereas the agent of type $(3,1)$ wouldn't participate.  On the other hand, a mechanism that charges a price of $1.2$ for a $50\%$ chance at the item would sell to the agent of type $(3,1)$ whereas agent type $(7,4)$ would opt out.  But a mechanism that randomizes uniformly between these two allocation rules obtains revenue $0$, since neither agent type would choose to participate.}  This immediately rules out many convex programming and duality-based approaches to revenue maximization that are common in the algorithmic mechanism design literature.

To address this challenge, we instead discretize the type space and directly optimize over feasible allocation rules.  This involves tracking not only the allocation rule itself, but also the utility obtained by each buyer type, as this is necessary for determining which buyer types will opt into the mechanism.

We note that the mechanism returned by our FPTAS can have a menu of lotteries of size potentially linear in the number of buyer costs.  We show that this is not an artefact of the approximation:
the menu size of the revenue-maximizing mechanism is also at most linear in the number of possible costs.
Moreover, this bound on the menu size is tight up to a multiplicative factor of $2$, even in the special case that the participation cost is perfectly correlated with the value.
We also bound the number of samples required to learn an up-to-$\epsilon$ revenue-maximizing mechanism absent direct access to the underlying distribution.

\bigskip \noindent
\textit{Step 2: Optimizing Bounds on the ex-ante Allocation Probabilities}

We now turn back to the multi-agent mechanism design problem.
In any valid mechanism, the total sum of ex-ante probabilities of allocating the item to each buyer can be at most $1$.  The next step of our reduction is to choose how to divide this unit of probability among the agents in our mechanism.  Unfortunately, we face the same challenge as in the single-buyer problem: the non-convexity of revenue maximization in the presence of participation costs.
This blocks us from using convex-programming techniques to optimize over the allocation constraints, as is typical for applications of this approach.

We address this challenge by discretizing the set of potential ex-ante allocation probabilities, then directly optimizing over potential assignments of constraints to individual buyers via dynamic programming.  One important observation is that we must take a buyers' participation decisions into account when estimating the ex-ante probability of allocation.  I.e., if we want to find the allocation rule that optimizes revenue subject to allocating the item with total probability at most~$q$, then any buyer type who opts out of the mechanism should not count toward this $q$, and of course the allocation rule itself determines which types opt in or out.  Moreover, these decisions can be distorted by the discretization of the type space.  For this reason, we permit some slack in our ex-ante allocation constraints and we must bound the accumulation of errors in the estimated probability of allocation.

{An additional complication arises when adapting our single-buyer FPTAS analysis to revenue maximization with ex-ante allocation constraints. 
As it turns out, the revenue-optimal mechanism that sells with probability at most $q < 1$ might not be a single allocation rule.  Rather, the seller may want to randomize over multiple mechanisms with different (and potentially also random) allocation and payment rules.  This can be beneficial because the seller is permitted to announce the realized mechanism before the buyer chooses whether to opt in, and doing so could influence the buyer's participation decision.
Searching over all possible distributions over mechanisms is computationally infeasible. 
Fortunately, we show that to maximize revenue subject to an ex-ante allocation probability it suffices to consider distributions over at most two mechanisms, each satisfying one of our discretized allocation probabilities.
The optimal mechanism in this restricted class can be computed efficiently by dynamic programming. 
}

\bigskip \noindent
\textit{Step 3: Contention Resolution.}

The final step in our construction is to combine the single-agent interim allocation rules into a single multi-agent allocation mechanism.  One way to do this is to try allocating to each agent independently using her own interim allocation rule, then use contention resolution techniques if any conflicts arise. (I.e., if we try to allocate to multiple agents at the same time, choose which of them should receive the item, if any.)  Unfortunately, the presence of participation costs once again creates a problem: contention resolution modifies the interim allocation rule experienced by each agent, and this can influence each agent's decision of whether to participate in the mechanism. 

We address this challenge by employing an \emph{online} contention resolution scheme, such that the order in which we resolve allocations is aligned with a sequential opt-out-or-revelation implementation of our mechanism.  Specifically, we employ a variation of an online rounding method due to~\cite{alaei2012online}.  For each agent sequentially, the mechanism will pre-determine whether that agent is eligible to participate in the mechanism or not.  If so, that agent will face an allocation rule that is identical to the single-agent interim rule constructed in our reduction, and hence her participation incentives will be unchanged.  An appropriate choice of eligibility probabilities for each agent yields an ex post implementable allocation rule while reducing the total revenue by at most one half.  This ultimately leads to the mechanism promised in Theorem~\ref{thm:intro.many.agent.mech}.

\paragraph{Conditions for the Optimality of Posted-Price Mechanisms}

Even for a single buyer, we've shown that the optimal mechanism may require the use of lotteries. But are there conditions under which it is optimal to post a fixed take-it-or-leave-it price, as is the case without participation costs?

We show that if 
either (1) the valuation distribution is independent of the participation cost distribution and has decreasing marginal revenue or monotone hazard rate (\cref{asp:mhr});
or (2) the participation cost is 
a concave function of the value;
then the optimal mechanism simply posts a (carefully chosen) single price for the item. 
It turns out that, under these conditions, the optimal price is in fact equal to the standard monopoly price for the (single-dimensional) distribution of the difference between the value and the participation cost.
This is not a coincidence: conditional on only posting a single take-it-or-leave-it price, a mechanism cannot distinguish between types with the same difference between value and participation cost, and hence the mechanism's revenue depends only on this difference.

Finally, in the multi-buyer setting, 
as suggested by \citet{gershkov2018theory}, 
characterizing the exact revenue-optimal mechanism may not be tractable 
even when the participation costs are known 
and all buyers are symmetric. 
This is because the buyers do not satisfy the von Neumann–Morgenstern expected utility characterization
for their preference over lotteries,
and the revenue-optimal mechanism may not be symmetric even in this simplified case, as the set of feasible mechanisms is not convex (see Appendix \ref{apx:multi-buyer} for more details).
In contrast, we show that even in the asymmetric setting, 
if for each buyer,
either (1) the valuation distribution is independent of the participation cost distribution and has decreasing marginal revenue;
or (2) the participation cost is 
a concave function of the value;
then a sequential posted price mechanism guarantees a constant fraction of the optimal revenue.\footnote{Our result for the multi-buyer setting allows for a setting in which some of the buyers satisfy condition (1) while the others satisfy condition (2).}  Relative to our general mechanism construction from Theorem~\ref{thm:intro.many.agent.mech}, this result imposes more constraints on the buyers but yields an improved approximation factor and takes the simpler form of sequential take-it-or-leave-it prices.

\subsection{Related Work}
\label{sec:relatedwork}

Our paper closely relates to the literature on auctions with private outside options and endogenous participation. 
\citet{rochet2002nonlinear} illustrate that some general lessons from that work are not robust to the presence of a private value for the outside option when there are production costs (which our model does not have).
Follow-up work in economics
focuses mainly on qualitative features of the optimal mechanisms with costly participations,
such as showing that distortion at the top or bottom is not required. 
There are various extensions of the model by considering the optimal mechanisms when selling congestible goods \citep{jebsi2006optimal}, 
when principals are risk-averse \citep{basov2010optimal}, 
or when considering the optimal income taxation rule \citep{lehmann2011optimal}.
A recent paper by \citet{ashlagi2021costly} considers consumer surplus maximization when buyers have outside options, 
and buyers can only misreport their outside-option values by downward deviation. 
Relative to that economics literature, 
our work provides a general characterization of the optimal mechanism through a payment identity and virtual-value analysis,
polynomial time algorithms for computing the (approximately) optimal mechanisms,
and natural sufficient distributional conditions for price posting to be optimal or to guarantee a large fraction of the optimal revenue.

There is a body of literature examining mechanism design with perfectly correlated outside options, i.e., the outside-option value is a deterministic and publicly known function of each agent's item value \citep{jehiel1996not,krishna1998efficient,jullien2000participation,figueroa2009role}.\footnote{\citet{jehiel1996not,figueroa2009role} motivate the perfectly correlated outside options through agents' externalities in allocations. In their model, the principal can control the outside option each agent receives, 
while in our model, the outside option values are exogenous and stochastic. } 
The assumption of perfectly correlated outside options significantly simplifies the optimal design problem, as illustrated in \citet{jullien2000participation} and \cref{sec:correlated} of our paper. 
However, in our study, the main challenges arise when the agents' outside option values are still stochastic even conditional on their item values. 

The model of costly participation also relates to the literature on return on investment (ROI) constraints \citep{golrezaei2021bidding,golrezaei2021auction,lucier2023autobidders}. 
In these papers, the ROI constraints for each agent can be seen as the additional utility the agent can gain by investing money in activities beyond the auction.

Our setting also relates to so-called ``one-and-a-half dimensional'' or ``interdimensional'' mechanism design settings.  In that domain, a buyer has one dimension representing her willingness to pay, 
and an additional ``half dimension'' representing a constraint
on her demand. 
For example, in the FedEx problem \citep{fiat2016fedex} 
the ``half dimension'' is the buyer's deadline,
and the buyer is only willing to accept the item before this deadline. 
For buyers with budgets \citep{che1998standard,devanur2017optimal}, 
the ``half dimension'' is the buyer's budget constraint, which places a cap on the maximum value she can pay for the item offered by the seller. 
In our setting, 
a buyer's participation cost can also be viewed as a ``half dimension'' of sorts 
since it only affects her decision of whether or not to participate in the auction;
conditional upon participating in the auction,
the cost does not affect the buyer's utility in the auction.  

Our paper also falls into the scope of designing
simple and approximately optimal mechanisms \citep{hartline2009simple}. 
This line of work mainly focuses on the case in which all buyers have zero cost for participation. 
For the single-item setting, sequential posted pricing guarantees a $1\!-\!\sfrac{1}{e}$ fraction of the optimal revenue \citep{yan-11}. 
For multi-item setting, 
sequential posted pricing guarantees a constant fraction of the optimal revenue when buyers have unit-demand valuations \citep{CHMS-10,cai2019duality},
and selling items separately or as bundles guarantees a constant fraction of the optimal revenue when buyers have additive valuations \citep{babaioff2020simple,cai2019duality}.
When buyers have non-linear utilities (e.g., budgeted utility), 
\citet{feng2020simple} show that constant-fraction approximation results (e.g., sequential posted pricing) for linear buyers 
can be generalized to non-linear buyers 
when the type distributions of the buyers satisfy some closeness property. 
There are numerous results on this topic. 
See the survey of \citet{roughgarden2019approximately} for a detailed discussion
on approximately optimal mechanisms in various other settings.

Finally, our results on the menu and sampling complexity of the revenue-optimal mechanism also contribute to the literature on menu sizes of optimal and approximately optimal mechanisms \citep[e.g.,][]{fiat2016fedex,hart2017approximate,babaioff2017menusize,devanur2017optimal,gonczarowski2018bounding,saxena2018menu,devanur2020optimal} and to the literature on the sample complexity of learning up-to-$\epsilon$ optimal mechanisms \citep[e.g.,][]{cole2014sample,morgenstern2015pseudo,devanur2016sample,gonczarowski2017efficient,hartline2019sample,gonczarowski2018sample,guo2019settling}.

\subsection{Roadmap}

In Section~\ref{sec:model} we formalize our model and describe our revelation principle: that any mechanism is revenue-equivalent to a truthful equilibrium of a sequential opt-out-or-revelation mechanism.  In Section~\ref{sec:opt} we characterize the truthful allocation rules for sequential opt-out-or-revelation mechanisms, and provide a virtual-value interpretation of revenue maximization.

In Section~\ref{sec:single} we consider revenue optimization for a single buyer.  We present our FPTAS in Section~\ref{sec:fptas}, and bound the menu and sample complexity of the optimal mechanism in Sections~\ref{sec:menu complexity} and~\ref{sec:discrete continuous}. Continuing with the single-buyer setting, in Section~\ref{sec:pricing} we present sufficient conditions for a posted-price mechanism to be revenue optimal.

In Section~\ref{sec:multi} we turn to the multi-agent setting.  We present our $O(1)$-approximate mechanism in Section~\ref{sec:multi-const}.  In Section~\ref{sec:multi-pricing} we establish conditions under which sequential posted pricing is approximately optimal.  We discussion two alternative models for costly participation and open questions in Section~\ref{sec:conclude}.
\section{Model and Preliminaries}
\label{sec:model}
A seller has a single item to offer for sale to $n$ buyers. 
Each buyer~$i$ has a private valuation $\val_i\geq 0$
for getting the item 
and a private cost $\cost_i$ for participating in the mechanism.\footnote{Although it is most natural to consider non-negative values for participation costs, 
negative values for participation costs can capture the potential social benefit for the buyer to participate the auction,  
which is not modeled in her valuation for the item.
Introducing negative values for participation costs will lead to different observations for the optimality of posted pricing, 
which will be discussed formally in \cref{apx:negative outside}.
}
Let $\type_i \equiv (\val_i,\cost_i)$ represent the private type of buyer $i$
and let $\jointF_i$ be the seller's prior distribution over $\type_i$ with density $\jointf_i$.
Let $F_i$ and $G_i$ be the marginal distributions of values and costs
with densities $f_i,g_i$.
Note that a buyer's value and cost may be correlated under $\jointF_i$.
Let $\jointF = \times_i \jointF_i$ be the product distribution of the buyers' type profiles.

\paragraph{Mechanisms}

A (possibly non-direct-revelation) mechanism is a communication protocol. This protocol can be represented by a game tree that determines which players can (simultaneously) send and/or receive messages at each round (i.e., node of the tree) and how those messages influence the progression of the protocol.  The leaves of the game tree specify rounds in which the protocol halts, at which point the mechanism terminates and outputs the allocation and payments for each buyer.  To capture participation decisions, we will assume that each agent's message space includes a special message $\psi$.  If the first message (and only the first message) sent by an agent is $\psi$, then the agent is said to have opted out of the mechanism and they receive utility equals zero.
% to their outside-option value $\cost_i$.

% When all participation costs are equal to zero, then by the revelation principle it is without loss to focus on truthful\footnote{By truthfulness we will mean individual rationality and Bayesian incentive compatibility.} simultaneous-move mechanisms in which agents simultaneously report their types and the mechanism immediately halts and returns an outcome~\citep{myerson1981optimal}.
% However, it is not immediate that the revelation principle extends to the setting with private participation costs and opt-out decisions.

We provide a generalized notion of a revelation mechanism, such that 
any mechanism can be transformed into an outcome-equivalent \emph{sequential opt-out-or-revelation mechanism}.  Intuitively, in such a mechanism agents only report their values, and there are fixed allocation and payment rules that map these reports to the mechanism's outcome.  But the mechanism is not restricted to simultaneous reports: it can approach agents sequentially to solicit their reports, and when it is an agent's turn to report she can opt out of the mechanism rather than report her value.  We formalize this class of mechanisms below.

\begin{definition}
A \emph{sequential opt-out-or-revelation mechanism} proceeds as follows:
\begin{itemize}
    \item There is a (possibly random, possibly adaptive) order $\order$ over the agents.  For each agent $i$ in this order, the mechanism sends a message $\info_i$ that can depend on the protocol history.  Agent $i$ then either opts out (by reporting $\psi$) or reports a declared value $\tilde{v}_i \geq 0$.  Write $\bar{\reals} = \reals \cup \{\psi\}$ for this message space.
    
    \item Once all agents have reported, the outcome is determined by an allocation function $\alloc:\bar{\reals}^n\to \Delta(\{0,1\}^n)$ and payment function $\pay: \bar{\reals}^n \to \reals^n$, such that each buyer $i$ reporting $\psi$ receives $x_i=p_i=0$.

    \item There exists an equilibrium in which each buyer $i$ that opts into the mechanism reports her value $v_i$ truthfully. 
    That is,  
    there exists a profile of strategies $\sigma$
    such that for any buyer $i$, any $\sigma_i(t_i, \info_i)  \in \{\psi, v_i\}$, 
    and any order $\order$,
    \begin{align*}
    \expect[t_{-\order_i}\sim \jointF_{-\order_i}]{u_{\order_i}(\mech(\sigma(t, \info))) \given \info_{\order_i}} 
    \geq \expect[t_{-\order_i}\sim \jointF_{-\order_i}]{u_{\order_i}(\mech(b,\sigma_{-\order_i}(t_{-\order_i}, \info_{-\order_i})))\given \info_{\order_i}},\,
    \forall b\in \bar{\reals}.
    \end{align*}
\end{itemize}
\end{definition}

Note that this class includes traditional simultaneous-move direct revelation mechanisms, as this corresponds to each $\info_i$ being an empty message.  When there is only a single buyer, a sequential opt-out-or-revelation mechanism has an especially natural form: it is simply a direct-revelation mechanism (characterized by a truthful allocation and payment rule) in which each buyer type $(v_i, c_i)$ is permitted to opt out, and will do so precisely if her expected utility from the mechanism is less than $c_i$.

In \cref{sec:revelation} we prove that any mechanism protocol can be converted into a sequential opt-out-or-revelation mechanism with the same distribution over outcomes (and hence the same expected revenue). 

\begin{lemma}\label{thm:revelation}
For any type distribution $\jointF$, any mechanism $\mech$, and any Bayes-Nash equilibrium of the corresponding game, there exists a sequential opt-out-or-revelation mechanism $\widehat{\mech}$ and a truthful equilibrium of $\widehat{\mech}$ that generates the same distribution over outcomes.
\end{lemma}

\paragraph{Posted-Price Mechanisms}
When there is a single buyer ($n=1$), we say that a sequential opt-out-or-revelation mechanism $\mech$ \emph{posts a price} if there exists a price $p$ such that the mechanism awards the item with probability~$1$ for a price of $p$ if the buyer participates and reports a value of no less than $p$, and otherwise the item is not awarded and $p=0$. With $n\ge2$ buyers, we say that a sequential opt-out-or-revelation mechanism $\mech$ is a \emph{sequential posted price mechanism} if there exist prices~$p_i$ and an order over the buyers such that the mechanism awards the item with probability~$1$ for a price of $p_i$ to the first buyer $i$---according to this order---who participates and reports a value of no less than $p_i$ (and no other buyers are charged any price), and the item is not awarded and no buyer is charged any price if no such buyer exists. Each buyer is informed about the availability of the item when he sees the price.

\paragraph{Revenue Maximization}\sloppy
For any mechanism $\mech$, 
let $\Rev(\type,\sigma;\mech) = \sum_i p_i(\sigma(\type))$ 
be the revenue of mechanism $\mech$ given type profile $\type$ and strategy profile $\sigma$, 
and let $\Rev(\jointF, \sigma; \mech) = \expect[\type\sim\jointF]{\Rev(\type, \sigma;\mech)}$
be the expected revenue given distribution $\jointF$.
We will omit the strategy~$\sigma$ in the notation if it is clear from the context. 
Let 
\begin{align*}
\opt(\jointF) = \max_{\mech}\max_{\sigma\in \BNE(\jointF,\mech)}\Rev(\jointF,\sigma;\mech)
\end{align*}
be the optimal expected revenue of the seller given distribution $\jointF$
under the Bayes--Nash equilibrium with highest expected revenue.
Let $\phi(\val) = v-\frac{1-F(v)}{f(v)}$
be the virtual value of the buyer with value $v$ and valuation distribution~$F$.

\begin{definition}\label{def:regular}\label{asp:mhr}\label{asp:dmr}
For any valuation distribution $F$ with density function $f$ and virtual value function~$\virtual$, 
\begin{itemize}
\item $F$ is regular if the virtual value $\phi(\val)$ is non-decreasing;
\item $F$ has decreasing marginal revenue (DMR) if $f(\val)\phi(\val)$ is non-decreasing;
\item $F$ has monotone hazard rate (MHR) if $\phi'(\val)\geq 1$ for all $v$.
\end{itemize}
\end{definition}

\section{Virtual Values and the Suboptimality of Posted Prices}
\label{sec:opt}

\cref{thm:revelation} shows that it is without loss to restrict to mechanisms that employ single-dimensional allocation and payment rules and give agents the opportunity to opt out.  But which allocation and payment rules are truthful for the buyers who choose to opt in?  In this section we show that, similar to the case without participation costs, truthfulness corresponds to monotonicity of interim allocation rules.  We establish a Myerson-style payment identity and virtual value characterization of truthful revenue maximization.  

For most of this section we will focus on the setting where there is only a single buyer in the market, so we will omit the subscript $i$ from our notation.
That said, all of the characterization results in this section extend directly to the multi-buyer setting by interpreting them as conditions on the interim allocation and payment experienced by each bidder; see the remark at the end of this section.
Note also that all results in this section apply even when the distribution over types allows for correlation between valuations and participation costs.  

In the following example, we show that
the private participation costs setting is qualitatively different from the setting in which the participation costs are normalized to zero,
by providing a distribution over values and participation costs---these will even be independently distributed---such that
offering lotteries to the buyer can strictly improve the expected revenue compared to posting a single fixed price.%
\footnote{There is an interesting conceptual connection between this example and the observation of \citet{deneckere1996damaged} that a seller may be able to strictly increase her revenue by introducing a damaged good into the market.
Note that our setting with a private participation cost can be converted into a two-item setting. In this two-item setting, one of the items would correspond to participating and winning while the other would correspond to participating and not winning---the latter having negative value with probability~$1$ (if the original participation cost is positive)---and the mechanism is constrained to ex-post allocate exactly one item to the buyer.
\cref{exp:lottery better} shows that price posting in our costly participation setting---
which in the transformed two-item setting translates to preventing the item with negative value from being sold---may not maximize revenue.
Therefore, in the transformed two-item setting, \cref{exp:lottery better} is interpreted as showing that the seller can strictly increase her revenue by introducing a new item with negative value---a good so damaged that its value to the buyer is in fact always negative.}
\begin{example}\label{exp:lottery better}
The value and cost of the buyer will be independently distributed.
The value distribution has CDF $F(\val) = 1 - \frac{1}{\val - 1}$ for $\val \in [2, 5)$ 
and $F(\val) = 1$ for $\val \geq 5$.
Note that $F$ is a regular distribution. 
The participation cost is $0$ with probability $0.15$ and~$2$ with probability $0.85$. 
In this example, the optimal posted-price mechanism is to post price $2$, resulting in revenue $0.8\bar{6}$. 
However, consider the mechanism that offers two probability-price lotteries $(1, 3), (\frac{2}{3}, \frac{4}{3})$. 
The buyer with value $5$ will choose the lottery $(1,3)$
and receives utility $2$ regardless of his participation cost. 
Moreover, the buyer with value in $[2,5)$ 
will choose the lottery $(\frac{2}{3}, \frac{4}{3})$ if his participation cost is $0$, 
and not participate in the auction if his participation cost is $2$. 
The expected revenue of this mechanism is $\frac{1}{4}\times 3 + \frac{3}{4} \times0.15 \times\frac{4}{3} = 0.9$, 
which is strictly higher than $0.8\bar{6}$.
The multiplicative gap between the optimal mechanism and optimal pricing is therefore higher than $1.038$.
\end{example}

A classic approach to designing the revenue-optimal mechanism for a single-item setting when the participation cost is known to the seller is to use a payment identity and virtual value analysis.

\begin{lemma}[\citealp{myerson1981optimal}]\label{thm:myerson}
In the single-item single-buyer setting, if the buyer has participation cost~$0$, 
for any truthful mechanism $\mech$ with allocation $\alloc$ and payment $\pay$,
it holds that $\alloc$ is non-decreasing and
the payment satisfies $\pay(\val) = \val \alloc(\val) - \int_0^{\val} \alloc(z) \,\dd z + \pay_0$ for some constant~$\pay_0$.
The expected revenue given distribution $F$ is 
$$\Rev(F;\mech)
= \expect[\val\sim F]{\alloc(\val)\phi(\val)} + p_0.$$
\end{lemma}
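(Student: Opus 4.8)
The plan is to reproduce the classical envelope-theorem argument of \citet{myerson1981optimal} in the notation of the present setting; here, with a zero outside option, truthfulness of the single-buyer mechanism just means individual rationality together with incentive compatibility. I would proceed in three steps: monotonicity of the allocation, the payment identity, and the revenue reformulation. For monotonicity, I would write down incentive compatibility for two types $\val$ and $\val'$, namely $\val\alloc(\val) - \pay(\val) \geq \val\alloc(\val') - \pay(\val')$ and symmetrically $\val'\alloc(\val') - \pay(\val') \geq \val'\alloc(\val) - \pay(\val)$; adding these two inequalities yields $(\val - \val')\bigl(\alloc(\val) - \alloc(\val')\bigr) \geq 0$, so $\alloc$ is non-decreasing.

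For the payment identity I would study the buyer's indirect utility $\util(\val) \equiv \val\alloc(\val) - \pay(\val)$. Incentive compatibility says $\util(\val) = \sup_{\val'}\bigl(\val\alloc(\val') - \pay(\val')\bigr)$, a pointwise supremum of affine functions of $\val$, hence $\util$ is convex; and for $\val' > \val$ the two incentive constraints sandwich the difference quotient, $\alloc(\val) \leq \frac{\util(\val') - \util(\val)}{\val' - \val} \leq \alloc(\val')$. Since $\alloc$ takes values in $[0,1]$ and is monotone, $\util$ is Lipschitz on every bounded interval, hence absolutely continuous, with $\util'(\val) = \alloc(\val)$ at every point where $\alloc$ is continuous (all but countably many $\val$). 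The fundamental theorem of calculus then gives $\util(\val) = \util(0) + \int_0^\val \alloc(z)\dd z$, and rearranging yields $\pay(\val) = \val\alloc(\val) - \int_0^\val \alloc(z)\dd z + \pay_0$ with $\pay_0 \equiv -\util(0) = \pay(0)$, the claimed constant.

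For the revenue formula I would take expectations and change the order of integration: $\Rev(F;\mech) = \expect[\val\sim F]{\pay(\val)} = \expect[\val\sim F]{\val\alloc(\val)} - \expect[\val\sim F]{\int_0^\val \alloc(z)\dd z} + \pay_0$, and by Fubini $\expect[\val\sim F]{\int_0^\val \alloc(z)\dd z} = \int_0^\infty \alloc(z)\bigl(1 - F(z)\bigr)\dd z$. Combining the first two terms, $\expect[\val\sim F]{\val\alloc(\val)} - \int_0^\infty \alloc(z)\bigl(1 - F(z)\bigr)\dd z = \int_0^\infty \alloc(\val)\bigl(\val f(\val) - (1 - F(\val))\bigr)\dd\val = \int_0^\infty \alloc(\val)f(\val)\virtual(\val)\dd\val = \expect[\val\sim F]{\alloc(\val)\virtual(\val)}$, which is exactly the asserted identity.

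The step I expect to require the most care is the absolute-continuity claim for $\util$: it is not enough that $\util$ is differentiable almost everywhere --- one needs $\util$ to be recovered as the integral of its derivative, which is precisely what Lipschitzness (from $\alloc \leq 1$) supplies, and one must also observe that the monotone function $\alloc$ has at most countably many discontinuities so that pinning down $\util'$ off a null set is harmless. Everything else is routine bookkeeping. I would also note in passing that individual rationality against a zero outside option forces $\util(\val) \geq 0$ for all $\val$, equivalently $\pay_0 \leq 0$, though this constraint plays no role in the stated identities and only matters when one later optimizes over mechanisms.
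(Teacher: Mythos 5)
Your argument is correct: the monotonicity-from-summed-IC step, the convex/Lipschitz indirect-utility derivation of the payment identity, and the Fubini rewriting into expected virtual surplus are exactly the classical Myerson (1981) proof, which is what the paper relies on here---it states this lemma as a cited result and gives no proof of its own. Your care about absolute continuity of $\util$ and the harmless remark that individual rationality forces $\pay_0\leq 0$ are both fine and do not conflict with how the lemma is used later.
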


In this section, we provide an analog of the payment identity 
when buyers have private participation costs.  That is, we 
show that it is without loss to consider a truthful mechanism as if the participation cost is $0$,
in which case the buyer will participate if and only if her utility from participating is at least her participation cost.

\begin{lemma}\label{lem:revelation}
It is without loss for the seller to commit to a monotone non-decreasing allocation rule 
$\alloc(\val)$ and payment rule 
$\pay(\val) = \val \alloc(\val) - \int_0^{\val} \alloc(z) \dd z + \pay_0$ for some constant $\pay_0$, 
and for the buyer to participate and truthfully reveal $\val$ if and only if $\util(\val; \alloc, \pay) = \int_0^{\val} \alloc(z) \dd z - \pay_0 \geq \cost$
where $\cost$ is the participation cost. 
\end{lemma}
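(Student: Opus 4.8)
The plan is to reduce to a single-dimensional problem by conditioning on participation and then apply a Myerson-style convexity argument. By the opt-out-or-revelation reduction discussed above (see \cref{sec:revelation}), I may assume the mechanism offers the buyer the choice to opt out, obtaining her outside-option value~$\cost$, or to participate and report an item value, receiving a lottery $\alloc(\cdot)\in[0,1]$ and price $\pay(\cdot)$. The source of the single-dimensional structure is that the buyer's participation utility from reporting~$z$, namely $\alloc(z)\val-\pay(z)$, does not depend on~$\cost$. Hence, writing $\util^*(\val)\equiv\sup_z\bigl(\alloc(z)\val-\pay(z)\bigr)$ for the best utility attainable by participating, a type $(\val,\cost)$ (weakly) prefers to participate exactly when $\cost\le\util^*(\val)$, and upon participating attains exactly $\util^*(\val)$.

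Next I would extract the structure of $\util^*$: as a pointwise supremum of affine functions of $\val$ with slopes in $[0,1]$, it is convex, nondecreasing, and $1$-Lipschitz, so its right derivative $\bar\alloc(\val)\equiv(\util^*)'_+(\val)$ is a nondecreasing $[0,1]$-valued function with $\util^*(\val)=\util^*(0)+\int_0^\val\bar\alloc(z)\dd z$. Taking $\bar\alloc$ as the allocation and $\bar\pay(\val)\equiv\val\bar\alloc(\val)-\int_0^\val\bar\alloc(z)\dd z+\pay_0$ with $\pay_0\equiv-\util^*(0)$ defines a candidate mechanism; by construction a participating type who reports $\val$ obtains $\bar\alloc(\val)\val-\bar\pay(\val)=\util^*(\val)=\int_0^\val\bar\alloc(z)\dd z-\pay_0$, which is exactly the expression in the statement. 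Truthful participation is a best response since reporting $z$ yields $\util^*(z)+\bar\alloc(z)(\val-z)\le\util^*(\val)$ by convexity (the supporting line at $z$ lies below the graph), with equality at $z=\val$; comparing with the opt-out payoff $\cost$, the buyer participates and reports $\val$ precisely when $\util^*(\val)\ge\cost$.

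It remains to show the candidate mechanism loses no revenue, which I expect to be the main obstacle: the original allocation rule is neither monotone nor tied to any payment identity, so one cannot compare prices report-by-report directly. The key observation is that if the original mechanism has a type $(\val,\cost)$ participate with report~$z^*$, then optimality of~$z^*$ forces $\alloc(z^*)\val-\pay(z^*)=\util^*(\val)$, whence $\util^*(w)\ge\alloc(z^*)w-\pay(z^*)=\util^*(\val)+\alloc(z^*)(w-\val)$ for all $w$; that is, $\alloc(z^*)$ is a subgradient of $\util^*$ at $\val$, so $\alloc(z^*)\le\bar\alloc(\val)$. Therefore the price this type pays, $\pay(z^*)=\alloc(z^*)\val-\util^*(\val)\le\bar\alloc(\val)\val-\util^*(\val)=\bar\pay(\val)$ (using $\val\ge0$), while its participation decision is unchanged, since in either mechanism it participates exactly when $\cost\le\util^*(\val)$. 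Integrating over $\type\sim\jointF$ yields $\Rev(\jointF;\mech)\le\Rev(\jointF;\bar\mech)$ for every mechanism $\mech$, so the restricted class in the statement is without loss of generality. The only remaining care concerns measure-zero effects — the countably many kinks of $\util^*$ and ties among equally good reports — which are immaterial under the density assumption on $\jointF$ and in any case are handled by always taking the right derivative and the revenue-maximizing tie-break.
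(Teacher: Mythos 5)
Your proof is correct, but it takes a more self-contained route than the paper's. The paper's proof is essentially a two-step citation: after the opt-out-or-revelation reduction of \cref{sec:revelation}, the mechanism is truthful conditional on participation, so \cref{thm:myerson} immediately gives monotonicity of $\alloc$ and the payment identity, and the only new observation is that the participation utility $\int_0^{\val}\alloc(z)\dd z-\pay_0$ does not depend on $\cost$, so participation is a threshold rule in $\cost$. You share the reduction and the key $\cost$-independence observation, but instead of invoking Myerson's characterization of the already-truthful participating stage, you rebuild it from the indirect utility $\util^*(\val)=\sup_z(\alloc(z)\val-\pay(z))$: convexity and $1$-Lipschitzness give the monotone allocation $\bar\alloc=(\util^*)'_+$ and the payment identity with $\pay_0=-\util^*(0)$, and you then add an explicit revenue-dominance step (any optimal report's allocation is a subgradient of $\util^*$ at $\val$, hence at most $\bar\alloc(\val)$, so the per-type price weakly increases while the participation set is unchanged). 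This extra step is what lets your argument start from an arbitrary menu/best-response formulation rather than a mechanism already truthful conditional on participation; it also quietly handles values at which no type participates, where the original allocation is unconstrained and need not be monotone, a point the paper's appeal to \cref{thm:myerson} glosses over. The only technical care your version needs, which the paper's framing avoids, is attainment of the supremum defining $\util^*$ (or an $\epsilon$-best-response argument); since you anchor the argument in the opt-out-or-revelation equilibrium, where the truthful report attains $\util^*(\val)$, this is not a real gap. In short: same decomposition and same crucial observation about $\cost$, but your proof trades the citation of Myerson's lemma for a constructive convex-analysis argument plus a revenue comparison, which is slightly longer but somewhat more general.
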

\begin{proof}
In \Cref{sec:model}, we have shown that 
it is without loss to consider the opt-out-or-revelation mechanism 
where the buyer truthfully reveals her value conditional on participation. 
By \cref{thm:myerson}, 
the allocation satisfying the incentive constraint must be non-decreasing,
and the payment satisfies 
$\pay(\val) = \val \alloc(\val) - \int_0^{\val} \alloc(z) \dd z + \pay_0$ for some constant $\pay_0$.
Finally, the utility of the buyer for participating the auction 
is $\util(\val; \alloc,\pay)
= \val \cdot \alloc(\val) - \pay(\val)= \int_0^{\val} \alloc(z) \dd z - \pay_0$,
and she therefore maximizes utility by participating in the auction if and only if her utility is at least $\cost$. 
\end{proof}

\medskip\noindent
Given any allocation rule $\alloc$ and corresponding payment $\pay(\val) = \val \cdot\alloc(\val) - \int_0^{\val} \alloc(z) \,\dd z + \pay_0$, 
let $\val_{\alloc}(\cost) = \inf_{\val\geq 0} \{\val \cdot\alloc(\val) - \pay(\val)\geq c\}$,
and 
\begin{align*}
\alloc_{\cost}(\val) = \begin{cases}
\alloc(\val) & \val \geq \val_{\alloc}(\cost)\\
0 & \val < \val_{\alloc}(\cost).
\end{cases}
\end{align*}
That is, $\val_{\alloc}(\cost)$ is the minimum value at which a buyer with participation cost $\cost$ will choose to participate in a mechanism with allocation rule $x$, and $x_\cost(\val)$ is the resulting allocation rule taking the participation decision into account.
For any joint distribution $\jointF$, 
let $\jointF_c$ be the conditional value distribution 
when the participation cost is $c$. 
We define the virtual value of the buyer given conditional valuation distribution $\jointF_c$
as $\virtual_c(\val) = \val - \frac{1-\jointF_c(\val)}{\hat{f}_c(\val)}$.

\begin{figure}[t]
\hspace{-5pt}
\begin{minipage}[t]{0.49\textwidth}
\centering
\setlength{\unitlength}{1cm}
\thinlines
\begin{tikzpicture}[scale = 0.45]

\draw (-0.2,0) -- (12.5, 0);
\draw (0, -0.2) -- (0, 6);

\begin{scope}[thick]
\draw plot [smooth, tension=0.6] coordinates { (0,0) (1.5,2) (5, 3) };
\draw plot [smooth, tension=0.6] coordinates {(5, 3) (8,5) (12,5.5)}; 
\end{scope}

\draw [dotted] (5, 0) -- (5, 3);
\draw (5, -0.7) node {$v_c(x)$};

\draw [dotted] (8, 0) -- (8, 5);
\draw [dotted] (0, 5) -- (8, 5);
\draw (8, -0.7) node {$v$};

\draw (3, 4) node {$p(v)-p_0$};
\draw (3, 1.2) node {$c+p_0$};

\draw (0, 6.7) node {$x(v)$};
\draw (0, -0.7) node {$0$};

\end{tikzpicture}
\end{minipage}
\begin{minipage}[t]{0.49\textwidth}
\begin{tikzpicture}[scale = 0.45]

\draw (-0.2,0) -- (12.5, 0);
\draw (0, -0.2) -- (0, 6);

\begin{scope}[thick]
\draw (0,0) -- (5, 0);
\draw plot [smooth, tension=0.6] coordinates {(5, 3) (8,5) (12,5.5)}; 
\end{scope}

\draw [dotted] (5, 0) -- (5, 3);
\draw (5, -0.7) node {$v_c(x)$};

\draw [dotted] (8, 0) -- (8, 5);
\draw [dotted] (0, 5) -- (8, 5);
\draw (8, -0.7) node {$v$};

\draw (2.7, 3) node {$p_c(v)-p_0$};

\draw (0, 6.7) node {$x_c(v)$};
\draw (0, -0.7) node {$0$};

\end{tikzpicture}
\end{minipage}
\vspace{-8pt}
\caption{\label{fig:payment identity} The figure on the left is an illustration of 
the payment $\pay(\val)$ and the cutoff value $\val_{\alloc}(\cost)$
where $c-p_0\geq 0$. 
The figure on the right illustrates the interim allocation rule $\alloc_{\cost}(\val)$
when the participation cost is $\cost$. 
Here $\pay_{\cost}(\val)$ is the incentive compatible payment rule for allocation rule $\alloc_{\cost}(\val)$.
The figure illustrates that 
$\pay_{\cost}(\val) = \pay(\val) + c$
for any $\val\geq \val_{\alloc}(\cost)$.}
\end{figure}

\begin{restatable}{lemma}{lemRevEquiv}\label{thm:revenue equivalence}
Given any distribution $\jointF$ with marginal cost distribution $G$, 
and any mechanism $\mech$ with allocation $\alloc$
and payment rule $\pay$ with parameter $p_0$, 
the revenue of the seller is 
\begin{align*}
\Rev(\jointF;\mech) &= 
\expect[\cost\sim G]{\expect[\val\sim \jointF_c]
{\alloc_{\cost}(\val)\virtual_c(\val)}
- (1-\jointF_c(\val_{\alloc}(\cost)))\cdot 
\max\{-p_0, \cost\}}.
\end{align*}
\end{restatable}
The proof of \cref{thm:revenue equivalence} is given in \cref{sec:proof complexity}. 
Intuitively, the term $\expect[\val\sim \jointF_c]
{\alloc_{\cost}(\val)\virtual_c(\val)}$ represents the difference in social welfare and the agent's expected utility,
similar to the characterization in \citet{myerson1981optimal}.
The additional term $(1-\jointF_c(\val_{\alloc}(\cost)))\cdot 
\max\{-p_0, \cost\}$ in our setting is the additional utility the agent 
obtains from either saving her participation cost $c$, 
or the utility $-p_0$ the mechanism provides to the lowest type. 
Such additional utility for the buyer are subtracted to correctly calculate the expected revenue. 

In this paper, we focus on the problem of revenue maximization, 
so it is without loss to consider $p_0 \geq 0$.
When participation costs are non-negative, it is without loss to further assume that $p_0 = 0$.
The proof of the following lemma is given in \cref{sec:proof complexity}.
\begin{restatable}{lemma}{lempzz}\label{lem:p0=0}
For any mechanism with allocation and payment $\alloc, \pay$
such that $\pay(\val) = \val \alloc(\val) - \int_0^{\val} \alloc(z) \dd z + \pay_0$ for some constant $\pay_0$, 
there exists another mechanism with allocation and payment $\hat{\alloc}, \hat{\pay}$ 
with weakly higher revenue
and $\hat{\pay}(\val) = \val \hat{\alloc}(\val) - \int_0^{\val} \hat{\alloc}(z)\dd z + \hat{p}_0$
where $\hat{p}_0\geq 0$.
This can be strengthened to $\hat{p}_0 = 0$ if the participation costs are non-negative. 
\end{restatable}

\paragraph{Remark: Multiple Buyers}  We note that Lemmas~\ref{lem:revelation},~\ref{thm:revenue equivalence}, and~\ref{lem:p0=0} extend immediately to the case of multiple buyers by interpreting $x$ and $p$ as a buyer's interim allocation and payment rules.  In the multi-buyer case it is important to note that the interim allocation rule for agent $i$ is evaluated in expectation not only over the types of the other agents, but also over their (possibly randomized) participation decisions.  For weakly monotone interim allocation rules, the revenue obtained by the seller from each buyer is the virtual surplus less the participation cost adjustment, as in Lemma~\ref{thm:revenue equivalence}.

\section{Revenue Optimization for a Single Buyer}
\label{sec:single}

\subsection{An FPTAS Algorithm}
\label{sec:fptas}
We now turn to the problem of optimizing over the space of allocation rules identified in the previous section.
We will assume that both the value and the cost are supported in $[0,1]$.\footnote{Any bounded distribution can be normalized such that the support is in $[0,1]$. 
} 
We show that by discretizing the valuation space and the allocation space of the mechanism, 
the loss in optimal revenue is small, 
and for the discretized problem, the optimal mechanism can be computed efficiently using dynamic programming. 
The following two lemmas quantify the discretization errors, 
with proofs provided in \cref{sec:proof complexity}.

\begin{restatable}{lemma}{thmdiscreteerror}\label{thm:discrete}
Let $(\Omega, \cF, P)$ be any probability measure, 
and let $t_1, t_2 : \Omega \to \reals^2$ be two $2$-dimensional random variables. 
If $\sup_{\omega\in\Omega} 
\inftynorm{t_1(\omega) - t_2(\omega)} \leq\epsilon$, 
then $|\opt(t_1) - \opt(t_2)| \leq 3\epsilon$, 
where $\opt(t_k)$ is the optimal expected revenue when the valuation and participation cost follow the same distribution as the random variable $t_k$.
\end{restatable}
\vspace{-8pt}
\begin{restatable}{lemma}{thmdiscretealloc}\label{thm:discrete alloc}
For any distribution $\jointF$ supported on $[0,1]^2$ and for any pair of mechanisms $\mech$ and $\widehat{\mech}$ with allocation rules $\alloc$ and $\hat{\alloc}$ such that
$\alloc(\val)\in [\hat{\alloc}(\val), \hat{\alloc}(\val)+\epsilon]$ for all $\epsilon$,  
we have $\Rev(\jointF; \mech) \geq \Rev(\jointF; \widehat{\mech})-\epsilon$.
\end{restatable}

Having bounded the errors we accumulate due to discretization, we are ready to describe our FPTAS.  The following is a restatement of Theorem~\ref{thm:intro.single.agent.mech} from the introduction.

\begin{restatable}{theorem}{thmfptas}\label{thm:fptas}
For any distribution $\jointF$ supported on $[0,1]^2$, 
for any $\epsilon\in (0,1)$, 
there exists an algorithm with running time $\poly{\frac{1}{\epsilon}}$ that computes a mechanism with revenue at least $\opt(\jointF) - O(\epsilon)$.
\end{restatable}
\begin{proof}
Let $\jointF'$ be the type distribution $\jointF$ with each value rounded down to the nearest multiple of $\epsilon$.
Let $q_i$ be the marginal probability that the value equals $i\cdot \epsilon$,
and let $\jointF'_{i}$ be the distribution over participation costs conditional on the event that the rounded value is $i\cdot \epsilon$.
By \cref{thm:discrete}, we have $\opt(\jointF') \geq \opt(\jointF)-3\epsilon$.
Let $\mech$ be the optimal mechanism for distribution $\jointF'$ with allocation only taking values on the discretization grid. 
By \cref{thm:discrete alloc}, we have 
$\Rev(\jointF'; \mech)\geq \opt(\jointF') -\epsilon$.
Combining the inequalities, we have
\begin{align*}
\Rev(\jointF; \mech)\geq\Rev(\jointF'; \mech)
\geq \opt(\jointF') -\epsilon
\geq \opt(\jointF)-4\epsilon
\end{align*}
where the first inequality holds since the value distribution in $\jointF$ first order stochastically dominates~$\jointF'$.

We now turn to computing the mechanism $\mech$, which we will do by dynamic programming.
For $i,j\leq \frac{1}{\epsilon}$ and $k\leq \frac{1}{\epsilon^2}$, let $R(i,j,k)$ be the optimal revenue from types with value at or below $i\cdot\epsilon$
when the allocation and utility of buyers with value $i\cdot\epsilon$ are $j\cdot \epsilon$ and $k\cdot \epsilon^2$ respectively.
We initialize the matrix by $R(1,j,k) = 0$ for any $k\leq j$ and $R(1,j,k) = -\infty$ for $k>j$. 
For any $i\geq 2$, we have 
\begin{align}
\label{eq:FPTAS.dp}
R(i,j,k) = \max_{j'\leq j} R(i-1,j',k-j) 
+ q_i\cdot \jointF'_{i}(k\cdot \epsilon^2) \cdot 
(i\cdot j-k)\cdot \epsilon^2.
\end{align}
To interpret this expression, we note that first term of \eqref{eq:FPTAS.dp}, $\max_{j' \leq j} R(i-1,j',k-j)$, is the revenue from types with value at most $(i-1)\epsilon$.
Indeed, the allocation of type with value $(i-1)\epsilon$ should not exceed the allocation of type with value $i\cdot\epsilon$, 
and hence the choice of $j'$ is at most $j$.
Moreover, when the allocation and utility of type with value $i\cdot \epsilon$ are $j\cdot \epsilon$
and $k\cdot \epsilon^2$ respectively, 
\cref{lem:revelation} implies that the utility of type with value $(i-1) \epsilon$ is exactly $(k-j)\epsilon^2$ (which is why we have $k-j$ as the third argument), which is independent of the choice of $j'$.

The second term of \eqref{eq:FPTAS.dp} is the expected revenue from types with value at most $i\cdot\epsilon$.
Note that 
$q_i\cdot \jointF'_{i}(k\cdot \epsilon^2)$ is the total probability of those types that have incentives to participate in the auction, 
and $(i\cdot j-k)\cdot \epsilon^2$ is the payment of those types, 
which is derived by subtracting the utility $k\cdot \epsilon^2$ from the expected value $i\cdot j\cdot \epsilon^2$,
if they choose to participate. 

In order to compute the matrix of $R$, we need to compute $\frac{1}{\epsilon} \times \frac{1}{\epsilon} \times \frac{1}{\epsilon^2} = \frac{1}{\epsilon^4}$ numbers,
and computing each number requires taking the maximum over at most $\frac{1}{\epsilon}$ numbers. 
Thus the matrix $R$ can be computed in time $O(\frac{1}{\epsilon^5})$.  

Once $R$ has been filled, note that $\Rev(\jointF'; \mech) = \max_{j,k}R(1/\epsilon,j,k)$.  To recover the allocation rule of $\mech$ from $R$, note that if $j'$ is the value of $j$ maximizing $R(1/\epsilon,j,k)$ then $\mech$ allocates $j\epsilon$ to agents with value $1$.  The rest of the allocation rule can be recovered similarly by unrolling the recursion from \eqref{eq:FPTAS.dp}: for each $i \leq 1/\epsilon$, if $j'$ is the choice that maximizes $R(i-1,j',k-j)$, then agents with value $(i-1)\epsilon$ are allocated $j'\epsilon$.
\end{proof}

\subsection{Menu Complexity}
\label{sec:menu complexity}
By the taxation principle, the optimal mechanism for the single buyer problem 
is essentially posting a menu of allocations and payments
for the buyer to choose from. 
Let $d$ be the size of the support for the cost distribution.
\cref{exp:lottery better} shows that it can be optimal to offer a menu of at least $2$ options to the buyer when $d\geq 2$.
In this section, we provide an upper bound, as a function of $d$, on the number of menu entries required in the optimal mechanism, i.e., on the menu size of the optimal mechanism. 
The main idea is to reduce the problem to $d$ different revenue maximization problems with public budgets. 
The proof is deferred to \cref{sec:proof complexity}.

\begin{restatable}{theorem}{thmmenu}\label{thm:menu}
For selling a single item to a single buyer with private participation cost,
when the size of the support for the cost distribution is $d$,
the menu size required for the optimal mechanism is at most $2d+1$.
\end{restatable}

The following proposition provides a lower bound on the menu complexity even when the participation cost is perfectly correlated with the value. 
This implies that the bound for the menu complexity in \cref{thm:menu} is tight up to a multiplicative factor of 2. 

\begin{proposition}[\citealp{jullien2000participation,li2022selling}]\label{prop:convex correlated}
For selling a single item to a single buyer with private participation cost,
for any $d\geq 1$,
there exists an instance where the participation cost $\cost_{\val}$ is a non-negative, increasing, and strictly convex function of $\val$,
the support sizes of both the marginal value and marginal cost distribution are $d$,
and the menu size required for the optimal mechanism is $d$.\footnote{\citet{li2022selling} considers a single-agent setting about selling information. They characterize the optimal mechanism in their setting by solving a relaxed problem that looks identical to selling a single item with costly participation where the cost of participation is a deterministic and convex function of the valuation. Therefore, we can directly apply the characterization in their relaxed problem to immediately show our lower bound on menu complexity. 
However, note that \citet{li2022selling} does not derive any result when the cost can be stochastic (e.g., independent of the value) or when there are multiple agents.}
\end{proposition}

\subsection{Sample Complexity}
\label{sec:discrete continuous}

In this section, we consider a setting in which the joint distribution $\jointF$ is unknown to the seller, who instead only has access to samples from this distribution, and upper bound the number of samples required to \emph{learn} an up-to-$\epsilon$ optimal auction. 
The proof is deferred to \cref{sec:proof complexity}.

\begin{restatable}{theorem}{thmsample}\label{thm:sample complexity}
For selling a single item to a single buyer with private participation cost, 
for any constant $\epsilon>0$, 
if the joint distribution $\jointF$ is supported on $[0,1]\times[0,1]$,
there exists a mechanism with access to $O(\epsilon^{-6}\log \epsilon^{-1})$ samples
that obtains expected revenue at least $\opt(\jointF) - \epsilon$.
\end{restatable}

Note that in contrast to the traditional single-item auction for a single buyer without participation costs where the sample complexity is $\tilde{O}(\epsilon^{-2})$ \citep[c.f.,][]{guo2019settling}, 
the sample complexity bound in our model is significantly higher. 
The increase of sample complexity originates from two sources.
First, we are learning a two-dimensional correlated distribution instead of a single-dimensional distribution, 
which is intrinsically harder. 
Second, the space of optimal mechanisms we search for is larger. 
Instead of only considering posted pricing mechanisms for auction without participation costs, 
we need to optimize over mechanisms with menu complexity $O(d)$.

\section{Optimality of Posted Pricing in the Single-Buyer Setting}
\label{sec:pricing}

Recall that as illustrated in \cref{exp:lottery better}, in general it is not revenue-optimal to post a single take-it-or-leave-it price when buyers have participation costs.  In Theorem~\ref{thm:fptas} we compute an approximately revenue-optimal mechanism, which may involve a menu with multiple lotteries.  But are there conditions under which this complexity is unnecessary, and we recover the simplicity of the posted-price mechanisms that are optimal when participation costs are zero?

In this section, we show that under natural assumptions on the joint distribution, 
posting a deterministic take-or-leave-it price is optimal for the seller. 
We will focus on the case in which the private participation cost is non-negative, 
and by \cref{lem:p0=0}, we set $p_0=0$
and omit it in the notation. 

\subsection{Independent Participation Costs}
In this section, we consider the setting where the cost distribution is independent from the valuation distribution. 
In this case, we provide two sufficient conditions on the valuation distribution such that posted pricing is optimal for revenue maximization. 

\subsubsection{DMR Valuation Distribution}

We start by analyzing the case in which the valuation distribution has decreasing marginal revenue.  We begin with a technical lemma that encapsulates a useful implication of the decreasing marginal revenue property, 
with proof provided in \cref{apx:pricing single}.

\begin{restatable}{lemma}{lemmonotone}\label{lem:monotone}
If the valuation distribution $F$ has decreasing marginal revenue, 
then for any $\cost \geq 0$ and any $\val\geq \cost$, 
we have that $f(\val)(\phi(\val)-\cost)$ is monotone non-decreasing in $\val$. 
\end{restatable}

\begin{restatable}{theorem}{thmdmr}\label{thm:dmr}
With independent values and participation costs, and with non-negative participation costs, 
if the value distribution~$F$ has
decreasing marginal revenue and bounded support,\footnote{The boundedness assumption is to avoid a situation in which the optimal mechanism is not well defined,
i.e., for any mechanism, there exists another mechanism with strictly higher revenue.
For example, if the buyer has participation cost $0$, and the CDF of the value is $F(v) = 1-\frac{1}{v+1}$ for $v\geq 0$, 
the supreme of the revenue is only attained in the limit by offering a take-or-leave-it price $p\to\infty$.
Note that bounded support is a sufficient condition to guarantee this property, but it may not be necessary. 
} 
there exists a revenue-maximizing mechanism that is a posted price mechanism.
\end{restatable}

The details of the proof is provided in \cref{apx:pricing single}.
Intuitively, DMR is an assumption stating that the seller can always extract higher revenue (or equivalently increase virtual welfare) by allocating the item to the agent with higher value instead of lower value when there is no participation cost. 
In \cref{thm:dmr}, we show that the extra negative impact of the participation cost on revenue is decreasing as we reduce the probability of selling the item to a given agent (this is transparent from \cref{thm:revenue equivalence}). 
So by shifting the allocation from lower-value agents to higher-value agents, we simultaneously increase the virtual welfare and decrease the negative impact of participation costs (as the agent will participate less often). 
Shifting the allocation in this way as much as possible ultimately leads to a step function corresponding to posted pricing, which must then be optimal.

\begin{remark} In this section we have focused on the case in which the participation cost is non-negative. 
In \cref{apx:negative outside}, we show that a deterministic  mechanism can guarantee at least 50\% of the optimal revenue when the participation cost could be positive or negative.  Note that, in this context, a deterministic mechanism 
is equivalent to setting a price $p_0$ for participating the auction, 
and an additional price $p$ for winning the item.  The presence of the additional participation cost $p_0$ is intuitive: a negative participation cost means that an agent has a strict incentive to participate in the auction regardless of outcome, and $p_0$ serves to extract a portion of that participation utility as revenue.
\end{remark}

\subsubsection{MHR Valuation Distribution}

We move on to analyzing the case in which the valuation distribution has monotone hazard rate.
Note that if the valuation distribution $F$ satisfies monotone hazard rate, 
the derivative of the virtual value is $\phi'(\val)\geq 1$ 
for all $\val\geq 0$,
and hence distribution~$F$ is regular as well.

\begin{lemma}[\citealp{devanur2017optimal}]\label{lem:regular imply dmr for small v}
If the valuation distribution $F$ is regular, then $f(\val)\virtual(\val)$ is non-decreasing in~$\val$ for $\val \in [0, \val^*]$ where $\val^* = \inf_{\val} \{\virtual(\val) \geq 0\}$. 
\end{lemma}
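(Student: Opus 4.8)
The plan is to work directly with the function $g(\val)\equiv f(\val)\virtual(\val)=\val f(\val)-(1-F(\val))$ and show that it is non-decreasing on $[0,\val^*]$. Differentiating (at points where $f$ is differentiable, which is a set of full measure; $g$ is continuous, so this suffices) gives $g'(\val)=f'(\val)\val+f(\val)+f(\val)=f'(\val)\val+2f(\val)$, so the claim reduces to proving $f'(\val)\val+2f(\val)\geq 0$ for all $\val\in[0,\val^*]$ in the support. Note this is exactly the inequality that already appeared in the ``$f'(\val)>0$'' branch of \Cref{lem:monotone} with $\cost=0$; the point of the present lemma is that regularity alone (rather than DMR) is enough \emph{below} $\val^*$.

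First I would dispatch the easy case $f'(\val)\geq 0$: since $\val\geq 0$ and $f(\val)\geq 0$, the quantity $f'(\val)\val+2f(\val)$ is manifestly non-negative. The substantive case is $f'(\val)<0$, where regularity enters. Computing $\virtual'(\val)=2+\frac{(1-F(\val))f'(\val)}{f(\val)^2}$, the hypothesis $\virtual'(\val)\geq 0$ is equivalent to $(1-F(\val))f'(\val)\geq -2f(\val)^2$, and since $1-F(\val)>0$ strictly below the right endpoint of the support, this rearranges to $f'(\val)\geq -\frac{2f(\val)^2}{1-F(\val)}$. Multiplying through by $\val\geq 0$ yields $f'(\val)\val\geq -2f(\val)\cdot\frac{\val f(\val)}{1-F(\val)}$.

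Now I would use the definition of $\val^*$: because $\virtual$ is non-decreasing and $\val^*=\inf_{\val}\{\virtual(\val)\geq 0\}$, every $\val\leq\val^*$ satisfies $\virtual(\val)\leq 0$, i.e.\ $\val f(\val)\leq 1-F(\val)$, so the ratio $\frac{\val f(\val)}{1-F(\val)}\in[0,1]$. Substituting this bound into the previous display gives $f'(\val)\val\geq -2f(\val)$, hence $g'(\val)=f'(\val)\val+2f(\val)\geq 0$ on $[0,\val^*]$, and integrating (equivalently, invoking continuity of $g$ together with $g'\geq 0$ a.e.) shows $g$ is non-decreasing there. The only obstacles are routine regularity-of-the-density bookkeeping rather than anything conceptual: one should handle values where $f(\val)=0$ or where $F$ is not differentiable (restrict to the interior of the support and use absolute continuity of $g$, or argue by a limiting/approximation argument), and one should note $\val^*$ may coincide with an endpoint of the support in degenerate cases, where the statement is vacuous or follows by continuity. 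I expect verifying these edge cases to be the only fiddly part of the write-up.
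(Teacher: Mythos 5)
The paper itself supplies no proof of this lemma---it is imported from \citet{devanur2017optimal}---so there is no in-paper argument to compare against; judged on its own, your proof is correct. Writing $g(\val)=f(\val)\virtual(\val)=\val f(\val)-(1-F(\val))$, your computation $g'(\val)=\val f'(\val)+2f(\val)$, the rearrangement of regularity ($\virtual'(\val)=2+(1-F(\val))f'(\val)/f(\val)^2\ge 0$) into $f'(\val)\ge -2f(\val)^2/(1-F(\val))$, and the observation that $\virtual(\val)\le 0$ below $\val^*$ forces $\val f(\val)/(1-F(\val))\le 1$, do combine exactly as you claim to give $g'(\val)\ge 0$ on $[0,\val^*)$, while the branch $f'(\val)\ge 0$ is trivial. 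Your argument is also the natural companion to the paper's proof of \cref{lem:monotone}, which performs the same sign-of-$f'$ case split but invokes DMR where you invoke regularity; the content of your proof is precisely that below $\val^*$ the inequality $\val f(\val)\le 1-F(\val)$ substitutes for the DMR hypothesis. The regularity-of-the-density caveats you flag (differentiability of $f$, points with $f(\val)=0$, the endpoint $\val=\val^*$) are treated at the same level of informality elsewhere in the paper, which also differentiates $f$ freely, so nothing further is needed.
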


\begin{restatable}{theorem}{thmmhr}\label{thm:mhr}
With independent values and participation costs, and with non-negative participation costs, 
if the value distribution $F$ has monotone hazard rate, 
there exists a revenue optimal mechanism that is a posted price mechanism. 
\end{restatable}
The details of the proof is provided in \cref{apx:pricing single}.
% Essentially, under the monotone hazard rate condition,
% there exists a cutoff in the value space such that 
% the increase in expected virtual value always exceeds the negative impact of participation costs 
% when the allocations for values above the cutoff are increased. 
% For values below the cutoff, the value distribution exhibits the same feature as DMR distributions, 
% and by applying \cref{thm:dmr}, 
% we show that the optimal allocation rule must also be a step function. 
%
Note that there is asymmetry between the conditions we require on the value distribution and the cost distribution to establish optimality of posting a price. 
We have shown that, for any distribution over participation costs, posting a price is optimal as long as the value distribution satisfies monotone hazard rate or decreasing marginal revenue.
However, even if the cost distribution is uniform, 
posting a fixed take-it-or-leave-it price need not be optimal without further assumptions on the value distribution. The following example illustrates such a scenario.

\begin{example}
The value of the buyer is $1$ or $2$ with probability $\frac{1}{2}$ each. 
The participation cost is uniformly distributed in $[0,1]$.
The optimal posted price mechanism is to post a price of $1$, generating revenue $\frac{1}{2}$. 
However, the mechanism that offers the menu of two probability-price lotteries $(1, 1), (\frac{2}{3}, \frac{1}{3})$ 
has revenue $\frac{5}{9}$. 
The multiplicative gap between the optimal mechanism and optimal pricing is therefore higher than $1.111$.
\end{example}

\subsection{Perfectly Correlated Participation Costs}
\label{sec:correlated}

In this section, we assume the participation cost is perfectly correlated with the value. 
That is, there is a publicly known mapping from each value $\val$ to the unique corresponding participation cost $\cost_{\val}$, and moreover $\cost_{\val}$ 
is monotone increasing in $\val$.\footnote{The case in which the participation cost $\cost_{\val}$ is monotone non-increasing in $\val$
is rather trivial since the individual rationality constraint only binds for the lowest type.} 
Note that as illustrated in \cref{prop:convex correlated},
when the participation cost is convex in the value,
the optimal mechanism may not be posted pricing. 
In fact, the menu complexity can be infinite for continuous valuation distributions. 
In contrast, when the participation cost $\cost_{\val}$ is concave in~$\val$, 
posted pricing is always revenue maximizing.
The proof of the following theorem is provided in \cref{apx:pricing single}.

\begin{restatable}{theorem}{thmCorrelate}\label{thm:correlate menu}
If the participation cost $\cost_{\val}$ is a non-negative and concave function of~$\val$,
then for any value distribution $F$,
there exists a revenue optimal mechanism that is a posted price mechanism.
\end{restatable}

\section{Mechanisms for Multiple Buyers}
\label{sec:multi}
\subsection{An Approximately Optimal Mechanism for Multiple Buyers}
\label{sec:multi-const}

For the multi-buyer setting, as illustrated in \cref{sec:model} (and in \cref{sec:revelation}), optimizing revenue involves not only the mechanism's allocation rule but also the equilibrium participation behavior of the agents.  As with the single-buyer case, revenue maximization is inherently non-convex in this environment (see \cref{sub:nonconvex} for further discussion).

In this section, we provide a polynomial time algorithm for computing a mechanism that is a constant approximation to the optimal revenue.  Our solution will take the form of a sequential opt-out-or-revelation mechanism in which the agents' participation decisions are unambiguous.\footnote{Each agent will be offered a menu of allocations and payments.  While the menu offered to agent $i$ can depend on the behavior of others, the agent's outcome will be otherwise independent of other agents' decisions given the menu.  Agent $i$ will therefore opt in precisely when the expected utility from the menu exceeds the participation cost.}  We will describe our algorithm for continuous type distributions.  At the end of this section we discuss how to adapt our procedure to discrete distributions provided as an explicit assignment of probabilities to a finite collection of types.  The following restates Theorem~\ref{thm:intro.many.agent.mech} from the introduction.

\begin{restatable}{theorem}{thmpolymulti}\label{thm:poly_multi}
For any product distribution $\jointF=\times_{i=n}\jointF_i$ with $\jointF_i$ supported on $[0,1]^2$
with density of the conditional participation cost being at most $\eta\geq 1$, 
for any $\epsilon\in (0,1)$, 
there exists an algorithm with running time $\poly{\frac{n\eta}{\epsilon}}$ 
that computes a mechanism with revenue at least $\frac{1}{2}\opt - \epsilon$.
\end{restatable}

As discussed in the introduction, the main idea of our construction is to reduce to the single-buyer revenue maximization problem, employing a reduction framework introduced by~\citet{alaei2014bayesian}.  To do this we consider the ex-ante relaxation of the optimal mechanism, 
efficiently compute the ex-ante optimal mechanism for each buyer, 
and sequentially offer those mechanisms to the buyers. 
One challenge for adopting this approach is that 
given an ex-ante constraint in the single-buyer problem, 
the efficient computation result through dynamic program only finds the best fixed mechanism, 
while it might be necessary for the seller to randomize over mechanisms. 
That is, the seller chooses a distribution over mechanisms, 
informs the buyer about the realization, 
offers the buyer the realized mechanism, 
and then the buyer makes the participation decision. 
Such randomization cannot be collapsed as a single mechanism with randomized allocations since it affects the buyer's participation decisions 
and eventually the expected revenue. 
The issue of requiring randomized mechanisms also arises 
when discretizing the allocation spaces. 
If we only consider a fixed mechanism, even if randomized allocation rules are allowed in the fixed mechanism, 
increasing the allocation probabilities to the discretized grid may violated the interim feasibility constraints, 
while decreasing the allocation probabilities may exclude the buyer from participating in the auction. 
Fortunately, such issue of randomization can be alleviated by showing that in order to approximate the optimal ex-ante revenue, it is sufficient to consider the randomization over two fixed mechanisms over a discretized space of allocation rules. 
We formalize this in Lemma~\ref{lem:ex-ante fixed} below, but first we describe the way to discretize the type distributions.

Formally, given a type distribution $\jointF$, we will approximate $\jointF$ via the following sequence of rounding operations.  First consider the discretization grid $\grid_0=\{0,\epsilon,2\epsilon,\dots,1\}$. 
Given type distribution $\jointF$, let $\jointF'$ be the distribution that rounds all values down to the discretization grid $\grid_0$ (i.e., down to the nearest multiple of $\epsilon$).
Write $z_k$ for the marginal probability that the rounded value equals $k \epsilon$,
and let $\jointF'_{k}$ be the distribution over participation costs conditional on the event that the rounded value is $k\epsilon$.
Moreover, consider an additional distribution~$\jointF\dprimed$ that is like $\jointF'$ but with two changes.  First, each marginal value probability $z_k$ is rounded \emph{up} to the nearest multiple of $\epsilon^2$; call these rounded probabilities $z\dprimed_k$.  Second, for each conditional cost distribution $\jointF'_{k}$, we will round all participation costs up to 
the nearest multiple of $\epsilon^2$.
Write $\jointF\dprimed_k$ for the resulting rounded conditional cost distribution.\footnote{Note that distribution $\jointF\dprimed$ is not a well defined distribution since the total probability measure may exceed 1. 
However, the expected revenue $\Rev(\jointF\dprimed; \mech)$ given any mechanism $\mech$ is well defined.}  Write $\grid_1$ for the discretization grid consisting of $\{0,\epsilon^2,2\epsilon^2,\dots,1\}$,
so that each $\jointF\dprimed_k$ is supported on $\grid_1$.
Let $\opt_{\quant}(\jointF)$ be the optimal revenue for distribution $\jointF$ given ex-ante allocation constraint~$\quant$
and let $\optd_{\quant}(\jointF)$ be the optimal revenue for distribution $\jointF$ given ex-ante constraint $\quant$,\footnote{An ex-ante allocation constraint is an upper bound on the probability that the agent will receive the item, in expectation over all sources of randomness including the agent's own type.}
only randomizing over two fixed mechanisms with allocations
and ex-ante probabilities on the discretized grid $\grid_0$. 

Now that we have described our intended rounding of the type distribution, 
we can show that the errors obtained under this rounding is small when we restrict our mechanism to have ex-ante sale probabilities that are multiples of $\epsilon$.
Moreover, such rounding errors allow us to compute the profile of optimal ex ante probabilities efficiently. 
The proof of the following lemma is provided in \cref{sub:multi-proof}.

% The next lemma shows that given the discretized environment, 
% the optimal mechanism through randomization over at most two fixed mechanisms with ex-ante probabilities on grid $\grid_0$ can also be computed efficiently. 
% These two lemmas combined can be adopted to show that the profile of optimal ex ante probabilities can be computed efficiently.
% The proofs of all three lemmas are provided in \cref{sub:multi-proof}.

% Now that we have bounded the errors due to discretization, we are ready to proceed with the construction of our mechanism. 
% For any $\hat{\epsilon}>0$ and the correspondingly constructed distribution $\jointF\dprimed_i$ as described above, 
% let $\{\quant\dprimed_i\}_{i\in[n]}$ be the profile of ex-ante probabilities in grid $\grid_0$ that maximizes 
% $\sum_{i\in[n]}\optd_{\quant\dprimed_i}(\jointF\dprimed_i)$
% . 
\begin{restatable}{lemma}{lemAproxComp}\label{lem:compute_approx}
For any $\hat{\epsilon}>0$,
there exists an algorithm with running time $\poly{\frac{n}{\hat{\epsilon}}}$ that computes a profile of $\{\quant\dprimed_i\}_{i\in[n]}$ subject to the constraint that $\sum_{i\in[n]} \quant\dprimed_i\leq 1$,
and the corresponding mechanisms $\{\mech_{i,\quant\dprimed_i}\}_{i\in[n]}$
such that $\{\quant\dprimed_i\}_{i\in[n]}$
maximizes $\sum_{i\in[n]}\Rev(\jointF_i; \mech_{i,\quant\dprimed_i})$
and 
% that generates revenue $\optd_{\quant\dprimed_i}(\jointF\dprimed_i)$ for any buyer $i$
% such that 
\begin{align*}
\Rev(\jointF_i; \mech_{i,\quant\dprimed_i}) \geq \optd_{\quant\dprimed_i}(\jointF\dprimed_i) - \hat{\epsilon}, \quad \forall i.
\end{align*}
\end{restatable}

\begin{proof}[Proof of Theorem~\ref{thm:poly_multi}]
By \cref{lem:compute_approx}, 
for any $\hat{\epsilon}>0$,
with running time $\poly{\frac{n}{\hat{\epsilon}}}$, 
we compute the profile of $\{\quant\dprimed_i\}_{i\in[n]}$
and the corresponding mechanisms $\{\mech_{i,\quant\dprimed_i}\}_{i\in[n]}$
such that 
\begin{align*}
\sum_{i\in[n]}\Rev(\jointF_i; \mech_{i,\quant\dprimed_i}) \geq 
\sum_{i\in[n]}\optd_{\quant\dprimed_i}(\jointF\dprimed_i) - n\hat{\epsilon}.
\end{align*}

Finally, to complete the proof of Theorem~\ref{thm:poly_multi} we must show how to combine the single-agent mechanisms $\mech_{i,\quant\dprimed_i}$ to construct a  multi-agent mechanism with similar total revenue.
Fix any order of the buyers, and consider a sequential mechanism that 
attempts to sell the item to each buyer $i$ in order.
As long as the item has not yet been sold, each buyer $i$ will be offered the mechanism (i.e., the allocation rule) $\mech_{i,\quant\dprimed_i}$ with 
probability~$\frac{1}{2 - \sum_{j < i}\quant\dprimed_j}$ (and with the remaining probability buyer $i$ will not be given an opportunity to participate).
We claim that under this procedure, each buyer~$i$ is offered the mechanism $\mech_{i,\quant\dprimed_i}$ with probability between 
$1/2 - n \eta  \hat{\epsilon}$ 
and $1/2$, and hence the total probability that buyer $i$ obtains the item is at least 
$(\frac{1}{2}  - n \eta  \hat{\epsilon}) \quant\dprimed_j$ and at most $\frac{1}{2}(\quant\dprimed_i + \eta\hat{\epsilon})$.  
Indeed, by induction, each buyer $j < i$ receives the item with total probability at most 
$\frac{1}{2}(\quant\dprimed_j + \eta\hat{\epsilon})$, 
and hence
the probability that the item is unsold when buyer $i$ is to be approached is at least 
$1 - \frac{1}{2}\sum_{j<i}(\quant\dprimed_j + \eta \hat{\epsilon})$. 
Taking into account the probability that we offer anything to buyer $i$ when the item is unsold, we conclude that buyer $i$ is offered mechanism $\mech_{i,\quant\dprimed_i}$ with probability at least 
$\frac{1 - \frac{1}{2}\sum_{j<i}(\quant\dprimed_j + \eta \hat{\epsilon})}{2-\sum_{j<i}\quant\dprimed_j} \geq 1/2 - n  \eta  \hat{\epsilon}$ 
and at most 
$\frac{1 - \frac{1}{2}\sum_{j<i}\quant\dprimed_j}{2-\sum_{j<i}\quant\dprimed_j} = 1/2$, 
as claimed.
Therefore, the expected revenue from this sequential mechanism is at least 
\begin{align*}
\left(\frac{1}{2} - n\eta\hat{\epsilon}\right)\sum_{i\in[n]}\Rev(\jointF_i; \mech_{i,\quant\dprimed_i})
&\geq \left(\frac{1}{2} - n\eta\hat{\epsilon}\right)\sum_{i\in[n]}\rbr{\optd_{\quant\dprimed_i}(\jointF\dprimed_i) - n\hat{\epsilon}}\\
&\geq \left(\frac{1}{2} - n\eta\hat{\epsilon}\right)\rbr{\max_{\{\quant_i\}_{i\in[n]}}\sum_{i\in[n]}\opt_{\quant_i}(\jointF_i) - 6n\hat{\epsilon}}\\
&\geq \left(\frac{1}{2} - n\eta\hat{\epsilon}\right)\opt - 3n\hat{\epsilon} \geq \frac{1}{2}\opt - 4n\eta\hat{\epsilon}
\end{align*}
where the second inequality holds by applying \cref{lem:ex-ante fixed}.
For any $\epsilon > 0$, 
by setting 
$\hat{\epsilon} = \frac{\epsilon}{4n\eta}$, 
the expected revenue loss is $\epsilon$, 
and the running time is 
$\poly{\frac{n\eta}{\epsilon}}$.
\end{proof}

\paragraph{Remark: Relationship to Correlation Gap} Note that the technique of correlation gap \citep{yan-11,alaei2013simple} for obtaining the approximation of $\sfrac{e}{(e-1)}$ 
does not apply here directly
since the buyers does not satisfy the expected utility representation given the allocation and payment rules, 
and the ex-ante optimal mechanisms for each agent offers complex menus instead of posting prices. 

\paragraph{Remark: Continuous versus Discrete Type Distributions} Our analysis above introduces runtime a dependency on $\eta$, the maximum density of the conditional participation cost in type distribution $\jointF$.  This dependency arises because of errors in estimated agent utility that are introduced when rounding values to the nearest multiples of $\epsilon$, which was needed to define our dynamic program over allocation rules.  Of course, $\eta$ is well-defined only for continuous type distributions.  Alternatively, if each agent $i$'s type distribution were listed explicitly as a discrete set of types $T_i$ and their associated probabilities, then instead of rounding values to multiples of $\epsilon$ we could replace the value grid $\Psi_0$ with the (at most) $|T_i|$ values in the support of agent $i$'s distribution.  Doing so would remove the errors in our utility estimates (since we are using the exact values) and therefore avoid any dependency on $\eta$. However, this introduces a polynomial runtime dependence on $\max_i |T_i|$ since our dynamic program for agent $i$ will need to include $|T_i|$ values.  Also, since the distribution is not continuous it might happen that there is a non-vanishing fraction of agent types who are indifferent between participating and not participating in the resulting allocation rule, but in this case the rule can be perturbed by an arbitrarily small amount to make the participation probability unambiguous.  The end result is that in Theorem~\ref{thm:poly_multi} we could replace $\eta$ with $\max_i |T_i|$ if each distribution $\jointF_i$ is a discrete distribution over $T_i$ types.

\subsection{Approximate Optimality of Posted Pricing}
\label{sec:multi-pricing}

In this section, we show that with additional assumptions on buyers' type distributions, 
sequential posted pricing is approximately optimal for revenue maximization. 
The following lemma provides a reduction framework for single-item environments 
that lifts approximation results for posted pricing 
from a single buyer to multiple buyers, 
and so avoids the direct analysis of interaction among different buyers.

\begin{lemma}[\citealp{feng2020simple}]\label{thm:multi-buyer reduction}
For a single-item environment, 
if there exists $\gamma\geq 1$ such that for each buyer $i$, for any constraint on the sale probability $q_i\leq 1$, 
the approximation ratio of (possibly randomized) posted pricing for the single-buyer problem
given sale probability constraint $q_i$ is at most~$\gamma$,
then the approximation ratio of sequential posted pricing is at most $\sfrac{\gamma e}{(e-1)}$ for the multi-buyer problem. 
\end{lemma}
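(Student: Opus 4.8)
\textbf{The plan} is to follow the standard ex-ante-relaxation / prophet-inequality template: decouple the $n$ buyers into single-buyer problems carrying sale-probability constraints, invoke the hypothesis on each, and recompose the resulting single-buyer posted prices into one sequential posted-price mechanism, losing only the $\tfrac{e}{e-1}$ factor that is the correlation gap of the rank-one matroid. For the first step I would show that
$\opt(\jointF)\le\max\bigl\{\sum_i\Rev_i(q_i)\ :\ q_i\ge 0,\ \sum_iq_i\le 1\bigr\}$,
where $\Rev_i(q)$ is the optimal single-buyer revenue extractable from buyer $i$ (against type distribution $\jointF_i$, over all opt-out-or-revelation mechanisms) subject to awarding her the item with ex-ante probability at most $q$. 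Fix a mechanism $\mech$ and a BNE $\sigma$ attaining $\opt(\jointF)$. For each $i$, the map sending a report $b_i\in\bar{\reals}$ to buyer $i$'s interim allocation $\expect[t_{-i}]{x_i(\mech(b_i,\sigma_{-i}(t_{-i})))}$ and interim payment is itself a single-buyer opt-out-or-revelation mechanism, and $\sigma_i$ is a best response to it; hence the revenue collected from buyer $i$ is at most $\Rev_i(q_i)$ with $q_i:=\Pr[\text{buyer }i\text{ wins}]$. Ex-post feasibility of a single item gives $\sum_iq_i\le 1$, and summing over $i$ yields the bound. Let $(q_i^*)_i$ attain the maximum.

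Next, by hypothesis there is, for each $i$, a (possibly randomized) posted-price mechanism $M_i$ that sells to buyer $i$ with probability $s_i\le q_i^*$ and has revenue $R_i\ge\Rev_i(q_i^*)/\gamma$. Writing $M_i$ in the normal form ``offer the fixed price $p_i$ with probability $\lambda_i$, otherwise do not transact'' (the general case is handled by treating each price in the support of $M_i$'s offer distribution as its own slot below), we have $R_i=p_i s_i$. I would then build the sequential posted-price mechanism that visits the buyers in non-increasing order of $p_i$ and, upon reaching an $i$ while the item is still available, offers her price $p_i$ with probability $\lambda_i$. Let $X_i$ indicate that buyer $i$ is offered and accepts; the $X_i$ are independent with $\Pr[X_i=1]=s_i$ and $\sum_i s_i\le\sum_i q_i^*\le 1$. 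Because buyers are processed in decreasing-price order, the item is sold to the accepting buyer who was offered the largest price (and is unsold if nobody accepts), so this mechanism's revenue equals
\[
\sum_i p_i\,s_i\prod_{j<i}(1-s_j).
\]

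It remains to prove $\sum_i p_i s_i\prod_{j<i}(1-s_j)\ge(1-\tfrac1e)\sum_i p_i s_i$. For every prefix $k$,
\[
\sum_{i\le k}s_i\prod_{j<i}(1-s_j)=1-\prod_{j\le k}(1-s_j)\ \ge\ 1-e^{-\sum_{j\le k}s_j}\ \ge\ \Bigl(1-\tfrac1e\Bigr)\sum_{j\le k}s_j,
\]
using $\prod_{j\le k}(1-s_j)\le e^{-\sum_{j\le k}s_j}$ and the inequality $1-e^{-\sigma}\ge(1-\tfrac1e)\sigma$ for $\sigma\in[0,1]$ (a concave function vanishing at both endpoints), which applies since $\sum_{j\le k}s_j\le 1$. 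Since $p_1\ge p_2\ge\cdots\ge 0$, Abel summation promotes this prefix inequality to the claimed weighted one; equivalently, this is the $1-\tfrac1e$ correlation gap of the rank-one matroid applied to the monotone submodular function $S\mapsto\max_{i\in S}p_i$. Chaining the three steps, the mechanism's revenue is at least $(1-\tfrac1e)\sum_i R_i\ge\tfrac{e-1}{e\gamma}\sum_i\Rev_i(q_i^*)\ge\tfrac{e-1}{e\gamma}\opt(\jointF)$, i.e.\ the approximation ratio of sequential posted pricing is at most $\gamma e/(e-1)$.

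\textbf{Main obstacle.} I expect the delicate part to be the ex-ante relaxation: with private outside options and the generalized opt-out-or-revelation equilibrium notion, one must verify carefully that each buyer's interim view really is a legitimate single-buyer instance governed by $\Rev_i(\cdot)$, and that the induced ex-ante win probabilities genuinely sum to at most one. The recomposition and the $1-\tfrac1e$ calculation are then essentially mechanical, the only care needed being the normal-form description of a randomized single-buyer posted price and the choice of the decreasing-price order, which is precisely what forces the weight of the telescoping sum onto the early, high-availability positions.
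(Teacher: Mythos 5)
The paper never proves this lemma---it is imported wholesale from \citet{feng2020simple}---so there is no internal proof to compare against. Your argument is the standard ex-ante-relaxation-plus-correlation-gap route, which is in substance what the cited reduction does: the relaxation $\opt(\jointF)\le\max\{\sum_i \Rev_i(q_i):\sum_i q_i\le 1\}$ via each buyer's interim allocation/payment rule being a feasible single-buyer mechanism, the invocation of the single-buyer hypothesis at $q_i^*$, and the prefix inequality $1-\prod_{j\le k}(1-s_j)\ge(1-\sfrac{1}{e})\sum_{j\le k}s_j$ promoted by Abel summation under the decreasing-price order are all sound. (Two small definitional points: $\Rev_i(q)$ must be a supremum over mechanism/best-response pairs whose sale probability is at most $q$, so that the interim rule together with $\sigma_i$ is a feasible point; and your ``slots'' treatment of a multi-price randomized $M_i$ works only because same-buyer slots are mutually exclusive, which pushes the first-acceptor probability above the independent product and so preserves the direction of the inequality---worth saying explicitly.)

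The step that needs genuine care in this paper's model is the one you call mechanical: the recomposition. Your revenue formula $\sum_i p_i s_i\prod_{j<i}(1-s_j)$ takes buyer $i$'s acceptance probability in the sequential mechanism to equal her single-buyer acceptance probability $s_i$, i.e., she accepts iff $v_i-p_i\ge c_i$. That is correct under the adaptive semantics in which a buyer communicates with the seller only while the item is still available, so that an approached buyer faces exactly the single-buyer take-it-or-leave-it problem and an unapproached buyer keeps her outside option. Under the paper's formal definition of a sequential posted price mechanism---an opt-out-or-revelation mechanism in which all buyers simultaneously opt out or report, and the first participant in the order reporting at least $p_i$ wins---a participant who does not win has already forfeited her outside option, so in any Bayes--Nash equilibrium buyer $i$ participates iff $\rho_i(v_i-p_i)\ge c_i$, where $\rho_i$ is the probability the item survives to her position; acceptance probabilities shrink below $s_i$ and your lower bound does not follow verbatim for that game. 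To close the argument you must either justify the adaptive interpretation (which is surely the intended one, and is what your construction actually implements) or redo the computation at the equilibrium of the simultaneous game---and this equilibrium-characterization difficulty is precisely why the paper invokes \cref{thm:multi-buyer reduction} as a black box rather than analyzing multi-buyer participation directly.
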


Note that when agents have private participation costs, 
it is important to specify the timeline for the agents to make participation decisions, 
as different timeline will affect the equilibrium choice of the agents for participation. 
In this section, for sequential posted pricing mechanisms, 
we assume that each agent only needs to make the participation decision after seeing the realized price offered by the seller.

Since \cref{thm:multi-buyer reduction} allows us to reduce the analysis to the single-buyer problem with allocation constraints, 
in the following two Lemmas, 
we will prove such approximation results for both independent participation costs 
and perfectly correlated participation costs.
Both of the proofs are provided in \cref{sub:multi-pricing}.

\begin{restatable}{lemma}{lemDmrSupply}\label{lem:dmr supply}
In the single-buyer setting with
independent and non-negative participation costs,
if the value distribution $F$ has
decreasing marginal revenue and bounded support,
for any mechanism that sells the item with probability $q\in [0, 1]$, 
there exists a (possibly randomized) posted price mechanism with weakly higher expected revenue that sells the item with probability at most $q$. 
\end{restatable}

\begin{restatable}{lemma}{lemCorrelateSupply}\label{thm:correlate}
In the single-buyer setting with
the participation cost being a non-negative and concave function of $\val$,
for any mechanism that sells the item with probability $q\in [0, 1]$, 
there exists a (possibly randomized) posted price mechanism with at least half of the expected revenue that sells the item with probability at most $q$. 
\end{restatable}
Combining \cref{thm:multi-buyer reduction} with \cref{lem:dmr supply,thm:correlate}, 
we have the following approximation result.

\begin{theorem}
In the multi-buyer setting, if for each buyer both the participation cost is non-negative and either of the following holds:
\begin{itemize}
\item
the participation cost is independent of the value,
and the value distribution $F$ has decreasing marginal revenue and bounded support; or
\item
the participation cost is a concave function of the value;
\end{itemize}
then there is a sequential posted pricing mechanism that generates at least a $\tfrac{1}{2}(1-\sfrac{1}{e})$
fraction of the optimal revenue. If the former condition holds for each buyer, then the guaranteed fraction of the optimal revenue is $1-\sfrac{1}{e}$.
\end{theorem}

\section{Discussions}
\label{sec:conclude}
In this paper we characterize the optimal mechanisms when buyers have private participation costs, show how to compute approximately revenue-optimal mechanisms in polynomial time,
and provide sufficient conditions on the value distributions or the cost distributions
such that posted pricing is optimal or approximately optimal. 
In this section, we will provide further discussions on the alternative models for costly participation and the remaining open questions. 

\subsection{Alternative Communication Models}
In this paper we focus on the class of sequential opt-out-or-revelation mechanisms, in which 
%a model where only 
the seller has the capability to send messages to the buyers
%, informing them about the mechanism they will encounter 
before they choose whether to participate. 
%These participation decisions can be coordinated sequentially.
This advance communication is useful for coordinating the buyers' participation decisions.
One might naturally wonder about alternative models that differ in how much communication is allowed before participation decisions are made.
%Apart from the one we introduce in our paper, there exist two alternative models for auctions with costly participation, each suitable for different applications.
We enumerate some possibilities here, each suitable for different applications.

\paragraph{Two-Way Communication} The buyers and the seller can communicate repeatedly in an unrestricted manner before participation decisions are made.  For example, the seller can elicit private value and cost information from the buyers and provide recommendations for participation.  A generalized revelation principle by \citet{myerson1982optimal} is applicable in this context.

\paragraph{One-Way Communication} The seller can send messages to the buyers before participation decisions are made, but a buyer cannot communicate with the seller before paying the cost of participation.  The sequential opt-out-or-revelation mechanisms studied in this paper fall within this model.

\paragraph{No Communication} Neither the seller nor the buyers can communicate prior to participation.  All buyers must make their participation decisions simultaneously.  In this case, the classic revelation principle by \citet{myerson1981optimal} applies.

\bigskip

%The first alternative model (two-way communication model) allows for repeated communication between buyers and the seller before participation decisions are made. In this model, the seller can elicit private information from buyers regarding their valuations and costs and provide sequential recommendations for participation. A generalized revelation principle by \citet{myerson1982optimal} is applicable in this context.
%The second alternative model (no-communication model) assumes that the seller cannot engage in sequential messaging with buyers, and all buyers must make their participation decisions simultaneously, with the classic revelation principle by \citet{myerson1981optimal} being applicable. Finally, our model in this paper can be seen as an intermediate regime with one-way communication.

In the single-buyer setting, these three models coincide.  As a result, all the results presented in \cref{sec:single,sec:pricing} apply seamlessly in any of these models. In the multi-buyer setting, however, a more restrictive communication model %before participation decisions results in a 
reduces the space of feasible mechanisms, %consequently leading to a weak decrease in optimal revenue. 
which has the potential to reduce optimal revenue.
Nevertheless, it turns out that the mechanisms we devised in \cref{sec:multi} -- which employ one-way communication -- are also approximately optimal with respect to the broader class of mechanisms in the two-way communication model.
This is because the approximation mechanisms we devised in \cref{sec:multi} remain approximately optimal when compared to the ex-ante relaxation benchmark. This benchmark serves as an upper bound for the optimal revenue even in the two-way communication model. Therefore, the approximation mechanisms proposed in \cref{sec:multi} are still approximately optimal when two-way communication is allowed.

On the other hand, our designed mechanisms heavily rely on the ability to make sequential participation decisions, rendering them infeasible in the no-communication model. An important future direction to explore is whether it's possible to compute an (approximately) optimal mechanism in the no-communication model for multiple buyers in polynomial time. Additionally, it's intriguing to investigate whether the revenue gap between the no-communication model and the two-way communication model is at most a constant.
%upper bounded by a constant.

% First note that in the single-buyer setting, these three models coincide and all of our results in \cref{sec:single,sec:pricing} apply. 
% In the multi-buyer setting, by restricting the communication before the participation decisions, the set of feasible mechanisms shrinks and hence the optimal revenue weakly decreases. 
% However, the approximation mechanisms we designed in \cref{sec:multi} are approximately optimal compared to the benchmark of ex ante relaxation, which is an upper bound even for the optimal revenue in the two-way communication model. 
% Therefore, the approximation mechanisms proposed in \cref{sec:multi} are still approximately optimal for the two-way communication model. 
% However, our designed mechanisms rely heavily on the ability to make sequential participation decisions, 
% which makes them infeasible in the no-communication model. 
% An important open questions here is whether we can compute the (approximately) optimal mechanism for multiple buyers in polynomial time. 
% Moreover, it is also interesting to explore whether the revenue gap between the no-communication model and the two-way communication model is at most a constant. 

\subsection{Open Questions}
Following our work, 
many interesting questions remain open even in the single-buyer setting, including:
\begin{enumerate}
\item For the setting with independent item values and non-negative participation costs,
is posted pricing a constant approximation to the optimal revenue without additional assumptions?

\item Can the exact optimal mechanism be computed in time polynomial in the size of the support?

\item For multi-item settings (e.g., unit-demand valuations) with non-negative participation costs, 
do there exist simple mechanisms (e.g., item pricing) that generate a constant fraction of the optimal revenue,
under assumption such as DMR or MHR on the item value distributions?
\end{enumerate}

% Of course, these questions are interesting in the multi-buyer setting as well.  When there are multiple buyers, it is also natural to ask about different ways to model the participation decisions of individual agents.  For example, we could restrict ourselves to designing direct-revelation mechanisms and evaluate their revenue in the worst case over any Bayes--Nash equilibrium of simultaneous participation decisions by the agents.  Is there such a mechanism that approximates the optimal revenue of \emph{any} sequential opt-out-or-revelation mechanism?  Can we compute the (approximately) optimal such mechanism in polynomial time?

\bibliographystyle{apalike}
\bibliography{ref}

\newpage
\appendix
\section{A Generalized Revelation Principle}
\label{sec:revelation}

Let $\hat{\reals} \equiv \reals^2\cup \{\emptyset\}$ be the augmented outcome space, 
where $\emptyset$ represents not participating the auction. 
Hence the utility of any buyer $i$ for outcome $\emptyset$ is $0$.

For the private participation cost setting, 
consider any interactive protocol between the seller and the buyer. 
For any $i$, let $\order_i$ be the $i$th buyer that first interacts with the seller. 
Let $\info_i$ be the message the seller send to buyer $\order_i$
before the interaction. 
Similar to the argument in \cite{myerson1981optimal},
any mechanism can be transformed into a generalized version of revelation mechanism 
$\widehat{\mech}:\reals^{2n} \to \hat{\reals}^n$
where all buyers participate the auction.
Moreover, conditional on information~$\info_i$,
for any buyer~$\order_i$ and any~$(\val_{\order_i}, c_{\order_i})$, 
if there exists $(\val_{-\order_i},c_{-\order_i})$ such that $\widehat{\mech}((\val_{\order_i},\val_{-\order_i}),(c_{\order_i},c_{-\order_i})) = \emptyset$, 
then $\widehat{\mech}((\val_{\order_i},\val'_{-\order_i}),(c_{\order_i},c'_{-\order_i})) = \emptyset$ for all $(\val'_{-\order_i},c'_{-\order_i})$. 
Note that since the buyers' utility conditional on participation is invariant of the participation cost, 
for any buyer~$\order_i$ with value~$\val_{\order_i}$, 
for any pair of participation cost $(c_{\order_i}, c'_{\order_i})$
such that $\widehat{\mech}((\val_{\order_i},\val_{-\order_i}),(c_{\order_i},c_{-\order_i})) \neq \emptyset$
and $\widehat{\mech}((\val_{\order_i},\val_{-\order_i}),(c'_{\order_i},c_{-\order_i})) \neq \emptyset$ for some $(v_{-\order_i}, c_{-\order_i})$, 
we have 
\begin{align*}
& \expect[v_{-\order_i}, c_{-\order_i}]{u_{\order_i}(\widehat{\mech}((\val_{\order_i},\val_{-\order_i}),(c_{\order_i},c_{-\order_i}))) \given \info_i} \\
=\,& \expect[v_{-\order_i}, c_{-\order_i}]{u_{\order_i}(\widehat{\mech}((\val_{\order_i},\val_{-\order_i}),(c'_{\order_i},c_{-\order_i}))) \given \info_i}.
\end{align*}

Finally, it is sufficient to show that this mechanism can be implemented as a sequential opt-out-or-revelation mechanism in our setting.
For any order $i$ and corresponding buyer~$\order_i$, let $\jointF'_{\order_i}$ be the distribution over $(\val_{\order_i}, c_{\order_i})$
conditional on 
$\widehat{\mech}((\val_{\order_i},\val_{-\order_i}),(c_{\order_i},c_{-\order_i})) = \emptyset$ for some (equivalently all) $(\val_{-\order_i},c_{-\order_i})$.
Moreover, let $G_{\order_i}(v_{\order_i})$ be the distribution over participation costs $c_{\order_i}$ conditional on value $\val_{\order_i}$ and $\widehat{\mech}((\val_{\order_i},\val_{-\order_i}),(c_{\order_i},c_{-\order_i})) \neq \emptyset$ for some (equivalently all) $(\val_{-\order_i},c_{-\order_i})$.
Recall that $\bar{\reals} \equiv \reals\cup\{\psi\}$ is the augmented report space
where $\psi$ represents not participating the auction.
Let $\mech$ be the sequential opt-out-or-revelation mechanism 
that interacts with buyers according to order $\order$,
sends information $\info_i$ to buyer $\order_i$ before the interaction,
and maps the reported valuations to the profile of allocations and payments constructed as follows. 
For any buyer $\order_i$ not participating the auction, or equivalently reporting $\psi$, 
the allocation and payment for buyer $i$ is zero. 
Moreover, when buyer $i$ reports $\psi$, 
mechanism~$\mech$ resamples $(\val_{\order_i},c_{\order_i})$ according to distribution 
$\jointF'_{\order_i}$, 
and when buyer $i$ reports $\val_{\order_i}$, 
mechanism~$\mech$ resamples $c_{\order_i}$ according to distribution 
$G_{\order_i}(\val_{\order_i})$. 
Then mechanism $\mech$ runs~$\widehat{\mech}$ on the generated type profile of all $n$ buyers. 
It is easy to verify that for the mechanism~$\mech$, 
there exists an equilibrium where 
each buyer participates (i.e., does not play $\psi$) in $\mech$ and reports truthfully on her valuation if and only if her outcome is not $\emptyset$ in $\widehat{\mech}$. 
Moreover, under mechanism~$\mech$, this equilibrium generates the same distribution over outcomes for all buyers 
as the truthful equilibrium of~$\widehat{\mech}$, and therefore the same revenue as in~$\widehat{\mech}$.

\section{Missing Proof for Characterization and Computation}
\label{sec:proof complexity}
\lemRevEquiv*
\begin{proof}
For any participation cost $\cost$, 
the buyer with cost $\cost$ participates in the mechanism if and only if $\val\geq \val_{\alloc}(\cost)$. 
First we consider the case $c+p_0> 0$, 
in which case $\val_{\alloc}(\cost)> 0$.
Moreover, the payment of a buyer having any value $\val\geq \val_{\alloc}(\cost)$ is 
$\val \cdot \alloc(\val) - \int_0^{\val} \alloc(z) \dd z + p_0
= \val \cdot \alloc_{\cost}(\val) - \int_0^{\val} \alloc_{\cost}(z) \dd z - c$, 
and the payment of a buyer having any value $\val< \val_{\alloc}(\cost)$ is $0$
since such a buyer does not participate the auction. 
This is illustrated in \Cref{fig:payment identity}.
Note that by \citet{myerson1981optimal}, 
we have 
\begin{align*}
\expect[\val\sim \jointF_c]{\val\cdot \alloc_{\cost}(\val) - \int_0^{\val} \alloc_{\cost}(z) \dd z}
= \expect[\val\sim \jointF_c]{\alloc_{\cost}(\val)\virtual_c(\val)}
\end{align*}
and thus the expected revenue given cost $\cost\geq -p_0$ is 
\begin{align*}
&\expect[\val\sim \jointF_c]{\val\cdot \alloc_{\cost}(\val) - \int_0^{\val} \alloc_{\cost}(z) \dd z
- \cost \cdot \indicate{\val \geq \val_{\alloc}(\cost)}}\\
=\,& \expect[\val\sim \jointF_c]{\alloc_{\cost}(\val)\virtual_c(\val)}
- (1-\jointF_c(\val_{\alloc}(\cost)))\cdot \cost,
\end{align*}
where ${\bf 1}$ is the indicator function.
For the case $c \leq -p_0$, 
we have $\val_{\alloc}(\cost) = 0$.
Thus the payment of a buyer having any value $\val\geq 0$ is
$\val \cdot \alloc(\val) - \int_0^{\val} \alloc(z) \dd z + p_0
= \val \cdot \alloc_{\cost}(\val) - \int_0^{\val} \alloc_{\cost}(z) \dd z + p_0$, 
and again by \citet{myerson1981optimal}, 
the expected revenue given cost $\cost$ is 
\begin{align*}
\expect[\val\sim \jointF_c]{\val\cdot \alloc_{\cost}(\val) - \int_0^{\val} \alloc_{\cost}(z) \dd z + p_0}
= \expect[\val\sim \jointF_c]{\alloc_{\cost}(\val)\virtual_c(\val)}
+p_0.
\end{align*}
% where ${\bf 1}$ is the indicator function.
Taking expectation over $\cost\sim G$
gives us the desired characterization.
\end{proof}

\lempzz*
\begin{proof}
Suppose first that $\pay_0 < 0$.  Given any allocation $\alloc$ and 
letting $\val' = \sup_{\val} \{\val \cdot \alloc(\val) - \int_0^{\val} \alloc(z) \dd z \leq \pay_0\}$,
define 
\begin{align*}
\hat{\alloc}(\val) &= \begin{cases}
\alloc(\val) & \val \geq \val'\\
\alloc(\val') & \val < \val',
\end{cases}\\
\hat{\pay}(\val) &= \val \hat{\alloc}(\val) - \int_0^{\val} \hat{\alloc}(z)\dd z.
\end{align*}
One can verify that the revenue of the seller is weakly higher with allocation $\hat{\alloc}$ and payment~$\hat{\pay}$
since the types in mechanism with $\alloc$ and $\pay$ with negative payment will not participate under the new mechanism 
while the types with positive payment will participate and pay the same amount. 

When $\pay_0 > 0$ and the participation costs are non-negative, 
given any allocation and payment $\alloc, \pay$, 
we can instead define 
\begin{align*}
\hat{\alloc}(\val) &= \begin{cases}
\alloc(\val) & \val \geq \sup_{\val} \{\int_0^{\val} \alloc(z) \dd z \leq \pay_0\}\\
0 & \val < \sup_{\val} \{\int_0^{\val} \alloc(z) \dd z \leq \pay_0\},
\end{cases}\\
\hat{\pay}(\val) &= \val \hat{\alloc}(\val) - \int_0^{\val} \hat{\alloc}(z)\dd z.
\end{align*}
The revenue of the seller is then unchanged if we use the mechanism with allocation $\hat{\alloc}$ and payment $\hat{\pay}$, instead of $\alloc$ and $\pay$, 
since all types of the buyer have the same allocation and payment in best response. 
\end{proof}

\thmdiscreteerror*
\begin{proof}
Due to the symmetry between $t_1$ and $t_2$, it is sufficient to show that $\opt(t_1) - \opt(t_2) \leq 3\epsilon$.
For simplicity, we write $t_k(\omega)$ for the pair $(\val_k(\omega), \cost_k(\omega))$.
Note that by \Cref{lem:revelation},
it is sufficient to assume $\alloc(\val), \pay(\val)$ are the optimal allocation and payment function
with parameter $p_0$. 
% , we have $p(0) = 0$.
Let $\hat{\alloc}(\val) = \alloc(\val+\epsilon)$
and $\hat{\pay}(\val) = \val\hat{\alloc}(\val) - \int_0^\val \hat{\alloc}(z) \dd z - 2\epsilon + p_0$. 
Next we show that $\Rev(t_2 ; \hat{\alloc}, \hat{\pay}) \geq \opt(t_1) - 3\epsilon$.
Since $\hat{\alloc}$ is non-decreasing, 
$\hat{\alloc}$ and $\hat{\pay}$ are incentive compatible for the buyer without participation cost, 
and the buyer will truthfully reveal her valuation to the mechanism if the utility for participation is at least her participation cost. 
Note that 
\begin{align*}
\hat{\pay}(\val) &= \val\hat{\alloc}(\val) - \int_0^\val \hat{\alloc}(z)\ \dd z - 2\epsilon + p_0 
= \val\alloc(\val + \epsilon) - \int_0^\val \alloc(z+\epsilon)\ \dd z - 2\epsilon + p_0\\
&= \val\alloc(\val + \epsilon) - \int_\epsilon^{\val+\epsilon} \alloc(z)\ \dd z - 2\epsilon + p_0
\geq \val\alloc(\val + \epsilon) - \int_0^{\val+\epsilon} \alloc(z)\ \dd z - 3\epsilon + p_0 \\
&= \pay(\val+\epsilon) - 3\epsilon.
\end{align*}
The above inequality holds since $\alloc(z) \leq 1$ for any $z$.
Moreover, we have 
\begin{align*}
\util(\val; \hat{\alloc}, \hat{\pay})
&= \int_0^\val \hat{\alloc}(z)\ \dd z + 2\epsilon + p_0
= \int_0^\val \alloc(z+\epsilon)\ \dd z + 2\epsilon + p_0\\
& \geq \int_0^{\val+\epsilon} \alloc(z)\ \dd z + \epsilon + p_0
= \util(\val+\epsilon; \alloc, \pay) + \epsilon
\end{align*}
where the inequality holds again because  $\alloc(z) \leq 1$ for any $z$.
Finally, combining the inequalities, 
we have 
\begin{align*}
\Rev(t_2 ; \hat{\alloc}, \hat{\pay}) 
&= \int_{\Omega} \hat{\pay}(\val_2(\omega))
\cdot \indicate{\util(\val_2(\omega); \hat{\alloc}, \hat{\pay}) \geq \cost_2(\omega)}\ \dd P(\omega)\\
&\geq \int_{\Omega} (\pay(\val_2(\omega)+\epsilon)-3\epsilon)
\cdot \indicate{\util(\val_2(\omega)+\epsilon; \alloc, \pay) + \epsilon \geq \cost_2(\omega)}\ \dd P(\omega)\\
&\geq \int_{\Omega} (\pay(\val_1(\omega))-3\epsilon)
\cdot \indicate{\util(\val_1(\omega); \alloc, \pay) \geq \cost_1(\omega)}\ \dd P(\omega)
\geq \opt(t_1) - 3\epsilon.
\end{align*}
The first inequality holds by combining the above inequalities, 
and the second inequality holds since $\inftynorm{t_1(\omega) - t_2(\omega)} \leq\epsilon$
and both $\pay$ and $\util$ are monotone in $\val$. 
\end{proof}

\thmdiscretealloc*
\begin{proof}
Let $\pay,\hat{\pay}$ and $\util,\hat{\util}$ be the payment functions and utility functions
in mechanisms $\mech$ and $\widehat{\mech}$ respectively. 
By \cref{lem:revelation}, we have 
\begin{align*}
\pay(\val) &= \val \alloc(\val) - \int_0^{\val} \alloc(z) \dd z 
\geq \val \hat{\alloc}(\val) - \int_0^{\val} (\alloc(z)+\epsilon) \dd z 
\geq \hat{\pay}(\val) - \epsilon
\intertext{and}
\util(\val) &=\int_0^{\val} \alloc(z) \dd z 
\geq \int_0^{\val} \hat{\alloc}(z) \dd z 
= \hat{\util}(\val).
\end{align*}
For any type with value $\val$ and participation cost $\cost$, 
since the utility for participation is higher in mechanism $\mech$, 
the agent has incentive to participate in mechanism $\mech$
only if he also has incentive to participate in mechanism $\widehat{\mech}$. 
Moreover, the payment difference is at most $\epsilon$.
Combining the observations, the expected revenue loss of mechanism $\mech$ is at most $\epsilon$.  
\end{proof}

\thmmenu*
\begin{proof}
Suppose the support of the cost distribution is 
$\{c_1,\dots,c_d\}$ 
where $c_j < c_{j+1}$ for any $1\leq j \leq d-1$.
By \Cref{lem:revelation}, it is without loss to consider a single allocation rule and 
a corresponding incentive compatible payment rule 
and let the buyer decide whether to participate or not
and, contingent on participation, 
choose her desired allocation and price. 
Suppose $\alloc^*$ is the optimal allocation function
and $\pay^*(\val) = \val \alloc^*(\val) - \int_0^{\val} \alloc^*(z) \dd z + p^*_0$
is the corresponding payment.  Recall that $\val_{x^*}(\cost)$ is the threshold value at which a buyer with participation cost $\cost$ will participate in the mechanism with allocation rule $x^*$.
Define $c_0\equiv -p^*_0$ and $\val_{x^*}(c_0) \equiv 0$.
Note that for any allocation $\alloc$ satisfying 
\begin{align}\label{eq:integration equiv}
\int_{\val_{x^*}(c_j)}^{\val_{x^*}(c_{j+1})} \alloc(z) \dd z
= \int_{\val_{x^*}(c_j)}^{\val_{x^*}(c_{j+1})} \alloc^*(z) \dd z
\end{align}
for all $j\in \{0,\dots,d-1\}$, 
the cutoff value for the buyer to participate the auction $\val_{x^*}(c_j)$ is not affected for any~$j\in \{1,\dots,d\}$. 
Thus it is sufficient to consider the optimization problem between $\val_{x^*}(c_j)$ 
and $\val_{x^*}(c_{j+1})$
respectively for any $j\in \{0,\dots,d-1\}$.
Note that the allocation~$\alloc$ must be non-decreasing between $\val_{x^*}(c_j)$ 
and $\val_{x^*}(c_{j+1})$, 
and satisfy the integration constraint in Equation~\eqref{eq:integration equiv}.
Thus the revenue maximization problem is reduced to maximizing the expected virtual value $\alloc_{\cost}(\val)\virtual_c(\val)$
for values between $\val_{x^*}(c_j)$ 
and $\val_{x^*}(c_{j+1})$, 
subject to the integration constraint for the allocation rule.
This is mathematically equivalent to solving a revenue maximization problem subject to a public budget constraint,
which is also virtual value maximization subject to an integration constraint on the allocation rule. 
In \citet{devanur2017optimal}, the authors have shown that the menu size of the optimal mechanism 
for this optimization problem is at most 2.
Thus we need at most $2d$ menu entries for $d$ separate programs to optimize the allocation below the cutoff $\val_{x^*}(c_d)$. 
Note that for optimizing the allocation above the cutoff $\val_{x^*}(c_d)$, 
the integration constraint is not required, 
and similar to \citet{myerson1981optimal}, a single menu entry is sufficient for the revenue maximization problem with linear buyers. 
Thus in total the menu size required is at most $2d+1$.
\end{proof}

Before the proof of \cref{thm:sample complexity}, 
we first show that the difference in optimal revenue is small when the estimation error on the discrete probability distribution is small. 
\begin{lemma}\label{thm:discrete prob}
Let $T\subseteq [0,1]\times[0,1]$ be a set with finite size, 
and let $\jointF_1, \jointF_2$ be two distributions 
supported on $T$. 
If $\max_{t\in T} 
\inftynorm{\jointF_1(t) - \jointF_2(t)} \leq\epsilon$,
then $|\Rev(\jointF_1;\mech) - \Rev(\jointF_2;\mech)| \leq |T|\cdot\epsilon$
for any mechanism $\mech$ with non-negative payment.
\end{lemma}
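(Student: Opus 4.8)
The plan is short: reduce $\Rev(\jointF_k;\mech)$ to a sum, over the finitely many atoms in $T$, of a per-type expected payment lying in $[0,1]$, and then compare the two sums atom by atom.

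First I would record that for any individually rational mechanism $\mech$ with non-negative payments and any type $\type=(\val,\cost)\in T\subseteq[0,1]\times[0,1]$, the interim expected payment $\pay_{\mech}(\type)$ of a buyer of type $\type$ under her best response satisfies $0\le\pay_{\mech}(\type)\le 1$. Non-negativity is assumed; and individual rationality (together with $\cost\ge 0$) forces the buyer's expected payment to be at most her expected value from the allocation, hence at most $\val\le 1$. Concretely, by \Cref{lem:revelation} one may pass to the opt-out-or-revelation form, in which $\pay_{\mech}(\val,\cost)=\pay(\val)\cdot\indicate{\util(\val;\alloc,\pay)\ge\cost}$ with $0\le\pay(\val)\le\val\le 1$ on the support. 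Since $\mech$ is fixed, the \emph{same} function $\pay_{\mech}(\cdot)$ governs both distributions, and
\begin{align*}
\Rev(\jointF_k;\mech)=\sum_{\type\in T}\jointF_k(\type)\,\pay_{\mech}(\type),\qquad k=1,2,
\end{align*}
where $\jointF_k(\type)$ denotes the probability that $\jointF_k$ places on the atom $\type$ (for a distribution supported on the finite set $T$, this is exactly the data bounded in the hypothesis).

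Then, subtracting and using the triangle inequality together with $0\le\pay_{\mech}(\type)\le 1$ and the hypothesis $\inftynorm{\jointF_1(\type)-\jointF_2(\type)}\le\epsilon$,
\begin{align*}
\abs{\Rev(\jointF_1;\mech)-\Rev(\jointF_2;\mech)}
=\abs{\sum_{\type\in T}\bigl(\jointF_1(\type)-\jointF_2(\type)\bigr)\pay_{\mech}(\type)}
\le\sum_{\type\in T}\abs{\jointF_1(\type)-\jointF_2(\type)}
\le\abs{T}\cdot\epsilon,
\end{align*}
which is the claimed bound. There is no substantive obstacle here; the only two points needing a sentence of justification are (i) that each per-type expected payment is at most $1$, which uses individual rationality and the fact that the type space is contained in $[0,1]\times[0,1]$, and (ii) that the two revenues are evaluated against one and the same payment function $\pay_{\mech}$, so that after subtraction no allocation or incentive-compatibility terms survive and the difference is a plain linear functional of the atomic masses. (If one instead reads $\jointF_k(\type)$ in the hypothesis as the joint CDF, the same argument goes through after first translating the pointwise CDF closeness into closeness of atomic masses, at the cost of a larger absolute constant; the $\abs{T}\cdot\epsilon$ bound as stated matches the probability-mass reading.)
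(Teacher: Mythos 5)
Your proposal is correct and follows essentially the same route as the paper: decompose $\Rev(\jointF_k;\mech)$ as a sum over the atoms of $T$ of a per-type revenue bounded by $1$ (using non-negative payments, individual rationality, and $T\subseteq[0,1]\times[0,1]$), then bound the difference atom by atom using the hypothesis. The paper's proof is the same one-line computation, so there is nothing substantive to add.
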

\begin{proof}
Again it is sufficient to show that $\Rev(\jointF_1;\mech) - \Rev(\jointF_2;\mech) \leq |T|\cdot\epsilon$, 
and the case for $\Rev(\jointF_1;\mech) - \Rev(\jointF_1;\mech) \leq |T|\cdot\epsilon$ 
holds symmetrically.
Note that 
\begin{align*}
\Rev(\jointF_1;\mech) - \Rev(\jointF_2;\mech)
= \sum_{t\in T} \Rev(t;\mech)(\jointF_1(t) - \jointF_2(t))
\leq |T|\cdot\epsilon
\end{align*}
since $\Rev(t;\mech)\leq 1$ for any $t\in T$ by individual rationality. 
Hence \Cref{thm:discrete prob} holds.
\end{proof}

\thmsample*
\begin{proof}
For any joint distribution $\jointF$, 
we construct the discrete distribution $\jointF'$ supported on the grid with increment~$\frac{\epsilon}{6}$ in support 
$[0,1] \times [0,1]$
by rounding down the value and rounding up the participation cost to the discretized points.
By \Cref{thm:discrete}, 
we have $\opt(\jointF') \geq \opt(\jointF) - \frac{\epsilon}{2}$.

Note that the size of the support of distribution $\jointF'$ is $36\epsilon^{-2}$.
We construct the distribution 
$\jointF\primed$ by
rounding down the value and rounding up the participation cost to the multiples of $\frac{\epsilon}{6}$
for any sample $t\sim \jointF$.\footnote{To clarify, $\jointF'$ rounds the true underlying distribution to the discrete support while $\jointF''$ rounds the empirical distribution to the discrete support.}
Note that it is easy to verify that $\jointF\primed$ is the empirical distribution for~$\jointF'$.
Moreover, for any $t$ in the support of distribution $\jointF'$, 
by Hoeffding's inequality (\cref{lem:Hoeffding}),
with $O(\epsilon^{-6}\log \epsilon^{-1})$ samples, 
$|\jointF'(t) - \jointF\primed(t)| \leq \frac{\epsilon^3}{288}$
with probability at least $1-\frac{\epsilon^3}{288}$.
By union bounds, 
with probability at least $1-\frac{\epsilon}{8}$ we have that 
$|\jointF'(t) - \jointF\primed(t)| \leq \frac{\epsilon^3}{288}$
for all $t$ in the support of distribution $\jointF'$.
Let $\mech$ be the optimal mechanism for empirical distribution $\jointF\primed$
and let $\mech'$ be the optimal mechanism for distribution $\jointF'$. 
By \cref{lem:p0=0}, both $\mech$ and $\mech'$ have non-negative payment, 
and 
\begin{align*}
\Rev(\jointF; \mech) 
&\geq \Rev(\jointF'; \mech)
\geq (1-\frac{\epsilon}{8})(\Rev(\jointF\primed; \mech) - \frac{\epsilon}{8})\\
&\geq \opt(\jointF\primed) - \frac{\epsilon}{4}
\geq \Rev(\jointF\primed; \mech') - \frac{\epsilon}{4}\\
&\geq (1-\frac{\epsilon}{8})(\Rev(\jointF'; \mech') - \frac{\epsilon}{8}) - \frac{\epsilon}{4}
\geq \opt(\jointF') - \frac{\epsilon}{2} 
\geq \opt(\jointF) - \epsilon.
\end{align*}
The first inequality holds since decreasing the value and increasing the participation cost weakly decreases the expected revenue of the mechanism. 
The second and the fifth inequalities hold by applying \cref{thm:discrete prob} 
and the observation that 
with probability at least $1-\frac{\epsilon}{8}$, we have that
$|\jointF'(t) - \jointF\primed(t)| \leq \frac{\epsilon^3}{288}$
for all $t$ in the support of distribution $\jointF'$.
\end{proof}

\section{Missing Proofs for Pricing in Single-Buyer Setting}
\label{apx:pricing single}
\lemmonotone*
\begin{proof}
If $f'(\val) \leq 0$, 
we have 
\begin{align*}
\frac{\partial f(\val)(\phi(\val)-\cost)}{\partial \val}
= \frac{\partial f(\val)\phi(\val)}{\partial \val} - f'(\val)\cdot \cost 
\geq 0.
\end{align*}
If $f'(\val) > 0$, 
we have 
\begin{align*}
\frac{\partial f(\val)(\phi(\val)-\cost)}{\partial \val}
= \frac{\partial f(\val)(\val-\cost) - (1-F(\val))}{\partial \val} 
= f'(\val)(\val-\cost) + 2f(\val) \geq 0
\end{align*}
where the inequality holds for $\val \geq \cost$. 
\end{proof}

\thmdmr*
\begin{proof}
Here we will prove the result for a slightly more general setting, 
where the participation costs can be correlated with the values, 
but the conditional value distribution~$\jointF_c$ has
identical and bounded support, and has decreasing marginal revenue for any participation cost~$\cost$. 

Denote the value upper bound by~$H < \infty$.
For any allocation rule $\alloc$ 
and associated payment rule $\pay$,
let $\bar{\val} = \sup_{\val\leq H} \{\alloc(\val) < 1\}$,
and $\mu = \int_{0}^{\bar{\val}} (1-\alloc(z)) \dd z$.
Note that $\mu$ is finite since $H < \infty$.
Let 
\begin{align}\label{eq:alloc}
\hat{\alloc}(\val) = \begin{cases}
1 & \val \geq \mu\\
0 & \val < \mu.
\end{cases}
\end{align}
\begin{figure}[t]
\begin{center}
\begin{tikzpicture}[scale = 0.55]

\draw (-0.2,0) -- (12.5, 0);
\draw (0, -0.2) -- (0, 6);

\begin{scope}[thick]
\draw (0, 0) -- (3.9, 0);
\draw (3.9, 5.5) -- (12, 5.5);
\end{scope}
\draw [dotted] (3.9, 0) -- (3.9, 5.5);

\draw [dashed] plot [smooth, tension=0.6] coordinates { (0,0) (1.5,2) (5, 3) };
\draw [dashed] plot [smooth, tension=0.6] coordinates {(5, 3) (8,5) (12,5.5)};

\draw [dotted] (6.3, 0) -- (6.3, 5.5);
\draw (6.3, -0.7) node {$v_c(\hat{x})$};

\draw (0, -0.7) node {$0$};
\draw (5.1, 2.4) node {$c$};%+p_0$};

\draw (12, -0.7) node {$\bar{v}$};
\draw [dotted] (12, 0) -- (12, 5.5);
\draw (3.9, -0.7) node {$\mu$};

\fill[pattern=north west lines, pattern color=gray!40!white] (3.9, 0) -- (6.3, 0) 
-- (6.3, 5.5) -- (3.9, 5.5);

\draw (8, 6.1) node {$\hat{x}(v)$};
\draw (8, 4.3) node {$x(v)$};

\draw (-0.4, 5.5) node {$1$};
\draw [dotted] (0, 5.5) -- (3.9, 5.5);

\end{tikzpicture}
\vspace{-6pt}
\caption{\label{fig:DMR}
Given any allocation rule $\alloc(\val)$ (dashed line),
$\hat{x}$ (solid line) is the allocation rule for a posted price mechanism such that 
$\int_{0}^{\bar{v}}x(v) \,{\rm d}v
= \int_{0}^{\bar{v}}\hat{x}(v) \,{\rm d}v$.
The area of the shaded region is $\cost$.}
\end{center}
\vspace{-15pt}
\end{figure}
Allocation rule $\hat{\alloc}$ is illustrated in \Cref{fig:DMR}.
For any participation cost $\cost$, 
it is easy to verify that 
\begin{align*}
\hat{\alloc}_{\cost}(\val) = \begin{cases}
1 & \val \geq \val_{\cost}(\hat{\alloc})\\
0 & \val < \val_{\cost}(\hat{\alloc}).
\end{cases}
\end{align*}
where $\val_{\cost}(\hat{\alloc}) = \bar{\val} - \mu + \cost$.
Moreover, $\int_0^{\vupper} \alloc(z) \dd z = \int_0^{\vupper} \hat{\alloc}(z) \dd z$ 
and $\int_0^{\val} \alloc(z) \dd z \geq \int_0^{\val} \hat{\alloc}(z) \dd z$ for any $\val\in [0,\vupper]$. 
Thus by \Cref{thm:revenue equivalence}, 
for any $\cost\geq 0$,
the revenue of mechanism with allocation~$\alloc$ given participation cost $\cost$ is
\begin{align}
\rev(\alloc; \cost) 
&= 
\expect[\val\sim F]{\alloc_{\cost}(\val)\virtual_c(\val)}
- (1-\jointF_c(\val_{\alloc}(\cost)))\cdot \cost
= \int_{\val_{\alloc}(\cost)}^{\vupper} \jointf_{\cost}(\val)(\alloc_{\cost}(\val)\virtual_{\cost}(\val) - \cost) \dd\val\nonumber\\
&\leq \int_{\val_{\alloc}(\cost)}^{\vupper} \jointf_{\cost}(\val)\alloc_{\cost}(\val)(\virtual_{\cost}(\val) - \cost) \dd\val\nonumber\\
&= \jointf_{\cost}(\val)(\virtual_{\cost}(\val) - \cost)\int_{0}^{\val}\alloc_c(z) \dd z {\Bigg|}_{\val=\val_{\alloc}(\cost)}^{\vupper}
- \int_{\val_{\alloc}(\cost)}^{\vupper} \int_{0}^{\val}\alloc_{\cost}(z) \dd z
\dd[\jointf_{\cost}(\val)(\virtual_{\cost}(\val) - \cost)]
\nonumber\\
&\leq \jointf_{\cost}(\val)(\virtual_{\cost}(\val) - \cost)\int_{0}^{\val}\hat{\alloc}_c(z) \dd z {\Bigg|}_{\val=\val_{\hat{\alloc}}(\cost)}^{\vupper}
- \int_{\val_{\hat{\alloc}}(\cost)}^{\vupper} \int_{0}^{\val}\hat{\alloc}_{\cost}(z) \dd z
\dd[\jointf_{\cost}(\val)(\virtual_{\cost}(\val) - \cost)]
\nonumber\\
&= \int_{\val_{\hat{\alloc}}(\cost)}^{\vupper} \jointf_{\cost}(\val)\hat{\alloc}_{\cost}(\val)(\virtual_{\cost}(\val) - \cost) \dd\val
= \int_{\val_{\hat{\alloc}}(\cost)}^{\vupper} \jointf_{\cost}(\val)(\hat{\alloc}_{\cost}(\val)\virtual_{\cost}(\val) - \cost) \dd\val
= \rev(\hat{\alloc}; \cost).\label{eq:rev increase}
\end{align}
The first inequality holds because $\cost\geq 0$ and $\alloc_{\cost}(\val)\in[0,1]$ for any $\val$.
The third and the fourth equalities hold by integration by parts. 
The fifth equality holds because $\hat{\alloc}_{\cost}(\val)=1$ for any $\val\geq \val_{\hat{\alloc}}(\cost)$.
The second inequality follows from the combination of two observations: 
\begin{enumerate}
\item $\alloc_{\cost}(\val) = 0$ 
for any $\val \leq \val_{\alloc}(\cost)$
and $\hat{\alloc}_{\cost}(\val) = 0$ 
for any $\val \leq \val_{\hat{\alloc}}(\cost)$. 
This further implies that 
$\int_{0}^{\val_{\alloc}(\cost)}\alloc_{\cost}(z) \dd z = \int_{0}^{\val_{\hat{\alloc}}(\cost)}\hat{\alloc}_{\cost}(z) \dd z = 0$. 
Hence 
\begin{align*}
\jointf_{\cost}(\val)(\virtual_{\cost}(\val) - \cost)\int_{0}^{\val}\alloc_c(z) \dd z {\Bigg|}_{\val=\val_{\alloc}(\cost)}^{\vupper}
= \jointf_{\cost}(\val)(\virtual_{\cost}(\val) - \cost)\int_{0}^{\val}\hat{\alloc}_c(z) \dd z {\Bigg|}_{\val=\val_{\hat{\alloc}}(\cost)}^{\vupper}.
\end{align*}

\item By the construction of $\hat{\alloc}$, 
we have $\int_0^{\val} \alloc(z) \dd z \geq \int_0^{\val} \hat{\alloc}(z) \dd z \geq 0$ for any $\val\geq 0$,
which implies $\val_{\alloc}(\cost) \leq \val_{\hat{\alloc}}(\cost)$
for any $\cost$. 
Moreover, by \Cref{lem:monotone},
$\jointf_{\cost}(\val)(\virtual_{\cost}(\val) - \cost)$ 
is non-decreasing in $\val$ for any $\val \geq \val_{\hat{\alloc}}(\cost)\geq \cost$
and $\val \leq \vupper$.
Hence, we have 
\begin{align*}
&\int_{\val_{\alloc}(\cost)}^{\vupper} \int_{0}^{\val}\alloc_{\cost}(z) \dd z
\dd[\jointf_{\cost}(\val)(\virtual_{\cost}(\val) - \cost)]\\
\geq\,&
\int_{\val_{\hat{\alloc}}(\cost)}^{\vupper} \int_{0}^{\val}\alloc_{\cost}(z) \dd z
\dd[\jointf_{\cost}(\val)(\virtual_{\cost}(\val) - \cost)]\\
\geq\,&
\int_{\val_{\hat{\alloc}}(\cost)}^{\vupper} \int_{0}^{\val}\hat{\alloc}_{\cost}(z) \dd z
\dd[\jointf_{\cost}(\val)(\virtual_{\cost}(\val) - \cost)]
\end{align*}
where the first inequality holds since 
$\val_{\alloc}(\cost) \leq \val_{\hat{\alloc}}(\cost)$,
$\int_0^{\val} \alloc(z) \dd z \geq 0$
and $\jointf_{\cost}(\val)(\virtual_{\cost}(\val) - \cost)$ 
is non-decreasing in $\val$.
The second inequality holds since 
$\int_0^{\val} \alloc(z) \dd z \geq \int_0^{\val} \hat{\alloc}(z) \dd z$
and $\jointf_{\cost}(\val)(\virtual_{\cost}(\val) - \cost)$ 
is non-decreasing in $\val$.
\end{enumerate}
Taking expectation over $\cost$, \cref{thm:dmr} holds. 
\end{proof}

\thmmhr*
\begin{proof}%[Proof of \Cref{thm:mhr}]
For any allocation rule $\alloc(\val)$,
let $\cost' = \inf_{\cost\geq 0} \{\virtual(\val_{\alloc}(\cost)) - \cost \geq 0\}$
and let $\val' = \val_{\alloc}(\cost')$. 
We first focus on the case that both $\cost'$ and $\val'$ are finite, 
i.e., there exists $\cost\geq 0$
such that $\virtual(\val_{\alloc}(\cost)) - \cost \geq 0$.\footnote{Since the value is independent of the participation cost, 
the virtual value function is invariant of the participation cost of the buyer. }
For any allocation rule $\alloc(\val)$,
let 
\begin{align*}
\hat{\alloc}(\val) = \begin{cases}
1 & \val \geq \val'\\
\alloc(\val) & \val < \val'.
\end{cases}
\end{align*}
\begin{figure}[t]
\begin{center}
\begin{tikzpicture}[scale = 0.55]

\draw (-0.2,0) -- (12.5, 0);
\draw (0, -0.2) -- (0, 6);

\draw [name path = C, dotted, white] (4, 5.5) -- (8, 5.5);
\begin{scope}[very thick]
\draw [dashed] plot [smooth, tension=0.6] coordinates { (0,0) (1.5,2) (8, 4) };
\draw [dashed] plot [smooth, tension=0.6] coordinates {(8, 4) (9,5) (12,5.5)};
\end{scope}

\begin{scope}[very thick, blue]
\draw [dashed] (4, 5.5) -- (12, 5.5);
\draw [dashed] (0, 0) -- (4, 0);
\end{scope}
\draw [name path = A, dotted] (4, 0) -- (4, 5.5);

\draw [name path = B, line width=0.18mm, red] plot [smooth, tension=0.6] coordinates { (0,0) (1.5,2) (8, 4) };
\draw [line width=0.18mm, red] (8, 5.5) -- (12, 5.5);
% \draw [name path = C, dotted] (8, 4) -- (8, 5.5);
\draw [name path = D, dotted, gray!50!white] (8, 4) -- (4, 5.5);

\draw [dotted] (8, 0) -- (8, 5.5);

\tikzfillbetween[of=A and B, every even segment/.style={pattern=crosshatch}]{pattern = north west lines, pattern color=gray!50!white};
\tikzfillbetween[of=C and D, every even segment/.style={pattern=crosshatch}]{pattern = north west lines, pattern color=gray!50!white};
% \fill[pattern=north west lines, pattern color=gray!50!white] (8, 4) -- (8, 5.5) 
% -- (4, 5.5) -- (8, 4);

% \draw (8, -0.7) node {$v_c(\hat{x})$};
% \draw (8, 4.3) node {$x(v)$};
\draw (0, -0.7) node {$0$};

\draw (3.9, -0.7) node {$\mu$};

\draw (8, -0.7) node {$v'=\bar{v}$};

\draw (-0.4, 5.5) node {$1$};
\draw [dotted] (0, 5.5) -- (4, 5.5);

\end{tikzpicture}
\vspace{-6pt}
\caption{\label{fig:MHR}
Given any allocation rule $\alloc(\val)$ (black dashed line),
$\hat{x}$ (red solid line) is the allocation rule 
that increases the allocation to $1$ for value aboves $v'$
and $\hat{x}'$ (blue dashed line) is the allocation rule that corresponds to posting price $\mu$.
The two shaded regions in the figure have the same area.}
\end{center}
\vspace{-15pt}
\end{figure}%
Allocation $\hat{\alloc}$ is illustrated in \Cref{fig:MHR}.
By \cref{thm:revenue equivalence},
since we set $p_0=0$,
the expected revenue of any mechanism with allocation rule $\alloc$ from the buyer with participation cost~$\cost\geq 0$ is 
\begin{align*}
\expect[\val\sim F]{\alloc_{\cost}(\val)\virtual(\val)} - (1-F(\val_{\alloc}(\cost)))\cdot \cost
= \int_{\val_{\alloc}(\cost)}^{\infty} f(\val)(\alloc_{\cost}(\val)\virtual(\val) - \cost) \dd\val.
\end{align*}
We separate the discussion into two cases. 
For any participation cost $c$ such that $\val_{\alloc}(\cost) \leq \val'$, 
we have $\val_{\alloc}(\cost) = \val_{\hat{\alloc}}(\cost)$. 
In this case, since $\phi(\val) \geq 0$ for any $\val \geq \val'$,
allocation $\hat{\alloc}$ only increases the allocation to types with positive virtual value~$\virtual(\val)$ compared to $\alloc$, 
which increases the expected revenue. 
For any participation cost $c$ such that $\val_{\alloc}(\cost) > \val'$,
we have $\val_{\alloc}(\cost) \geq \val_{\hat{\alloc}}(\cost) > \val'\geq 0$. 
In this case, for any 
$\tilde{\val} \in [\val_{\hat{\alloc}}(\cost), \val_{\alloc}(\cost)]$,
we have 
\begin{align*}
\virtual(\tilde{\val}) - \cost 
\geq \virtual(\val_{\hat{\alloc}}(\cost)) - \cost
\geq \virtual(\val') - c'
= 0,
\end{align*}
where the first inequality holds since 
$\tilde{\val}\geq \val_{\hat{\alloc}}(\cost)$.
The second inequality holds because 
$\virtual(\val) - \val$ is non-decreasing in $\val$
due to the MHR assumption, 
and $\val_{\hat{\alloc}}(\cost) \geq v'$.
Thus increasing the allocation to 1 for values between 
$[\val_{\hat{\alloc}}(\cost), \val_{\alloc}(\cost)]$ 
only weakly increases the revenue. 
Therefore, 
the mechanism with allocation rule $\hat{\alloc}$ generates weakly higher revenue compared to $\alloc$.

Let $\bar{\val} = \sup_{\val} 
\{\hat{\alloc}(\val) < 1 \text{ and } F(\val) < 1\}$
and $\mu = \int_{0}^{\bar{\val}} (1-\alloc(z)) \dd z$.
We claim that $\mu$ is finite. 
In the case that $\val'$ is finite or the maximum value in the support is $H < \infty$, 
we have that $\bar{\val} \leq \min\{\val', H\}$ is finite,
and hence $\mu$ is finite. 
In the case that $\val'$ is infinite
and $\bar{\val}$ is infinite, 
let $m$ be the value such that $\virtual(m)=0$.\footnote{For an MHR distribution, 
the value $m$ with virtual value zero is always finite \citep{hartline2008optimal}. } 
Suppose $\mu$ is infinite. 
Then for sufficiently large $c$, we have 
\begin{align*}
\virtual(\val_{\alloc}(\cost)) - \cost 
\geq \val_{\alloc}(\cost) - m - \cost 
= c + \int_0^{\val_{\alloc}(\cost)}(1-\alloc(z)) \dd z - m - \cost 
> 0
\end{align*}
where the first inequality holds because $\phi'(v) \geq 1$ since $F$ is MHR, 
and the equality holds by the definition of $\val_{\alloc}(\cost)$.
The last inequality holds since because $\mu$ is infinite,
for any $m$
there exist a sufficiently large $c$ such that 
$\int_0^{\val_{\alloc}(\cost)}(1-\alloc(z)) \dd z > m$.
Note that this contradicts to the condition that $\val'$ is infinite, 
and hence $\mu$ must be finite.
Let 
\begin{align*}
\tilde{\alloc}(\val) = \begin{cases}
1 & \val \geq \mu\\
0 & \val < \mu.
\end{cases}
\end{align*}
This is well defined since $\mu$ is finite. 
Allocation $\tilde{\alloc}$ is illustrated in \Cref{fig:MHR}.
For any participation cost~$\cost$ such that $\val_{\hat{\alloc}}(\cost) \geq \bar{\val}$, 
we have that 
$\val_{\hat{\alloc}}(\cost) = \val_{\tilde{\alloc}}(\cost)$
and the revenues are the same for both mechanisms. 
For any participation cost~$\cost$ such that $\val_{\hat{\alloc}}(\cost) < \bar{\val}$, 
we have that $\cost < \cost'$
since $\val_{\tilde{\alloc}}(\cost') = \val_{\alloc}(\cost') = \val'$
and $\bar{\val} \leq \val'$.
In this case, let $\hat{\phi}$ be the virtual value function such that 
$\hat{\phi}(\val) = \phi(\val)-c$ if and only if $\phi(\val) - \cost < 0$, 
and $\hat{\phi}(\val) = 0$ otherwise. 
By \cref{lem:regular imply dmr for small v}, 
it is easy to verify that the valuation distribution with virtual function $\hat{\phi}$ has decreasing marginal revenue. 
we have that 
\begin{align*}
&\rev(\hat{\alloc}; \cost) 
= \int_{\val_{\hat{\alloc}}(\cost)}^{\bar{\val}} f(\val)(\hat{\alloc}_{\cost}(\val)\virtual(\val) - \cost) \dd\val
+\int_{\bar{\val}}^{\infty} f(\val)(\hat{\alloc}_{\cost}(\val)\virtual(\val) - \cost) \dd\val\\
%% 2
&\leq \int_{\val_{\hat{\alloc}}(\cost)}^{\bar{\val}} f(\val)\hat{\alloc}_{\cost}(\val)(\virtual(\val) - \cost) \dd\val
+\int_{\bar{\val}}^{\infty} f(\val)(\tilde{\alloc}_{\cost}(\val)\virtual(\val) - \cost) \dd\val\\
%% 3
&= \int_{\val_{\hat{\alloc}}(\cost)}^{\bar{\val}} f(\val)\hat{\alloc}_{\cost}(\val)\hat{\virtual}(\val) \dd\val
+ \int_{\val_{\hat{\alloc}}(\cost)}^{\bar{\val}} f(\val)(\hat{\alloc}_{\cost}(\val)(\virtual(\val) - \cost -\hat{\virtual}(\val))) \dd\val
+\int_{\bar{\val}}^{\infty} f(\val)(\tilde{\alloc}_{\cost}(\val)\virtual(\val) - \cost) \dd\val\\
%% 4
&= \int_{\val_{\hat{\alloc}}(\cost)}^{\bar{\val}} f(\val)\hat{\alloc}_{\cost}(\val)\hat{\virtual}(\val) \dd\val
+ \int_{\val_{\tilde{\alloc}}(\cost)}^{\bar{\val}} f(\val)(\tilde{\alloc}_{\cost}(\val)(\virtual(\val) - \cost -\hat{\virtual}(\val))) \dd\val
+\int_{\bar{\val}}^{\infty} f(\val)(\tilde{\alloc}_{\cost}(\val)\virtual(\val) - \cost) \dd\val\\
%% 5
&\leq \int_{\val_{\tilde{\alloc}}(\cost)}^{\bar{\val}} f(\val)\tilde{\alloc}_{\cost}(\val)\hat{\virtual}(\val) \dd\val
+ \int_{\val_{\tilde{\alloc}}(\cost)}^{\bar{\val}} f(\val)(\tilde{\alloc}_{\cost}(\val)(\virtual(\val) - \cost -\hat{\virtual}(\val))) \dd\val
+\int_{\bar{\val}}^{\infty} f(\val)(\tilde{\alloc}_{\cost}(\val)\virtual(\val) - \cost) \dd\val\\
&= \rev(\tilde{\alloc}; \cost).
\end{align*}
The first inequality holds because the allocation 
$\hat{\alloc}_{\cost}(\val)\in [0,1]$ for any $\val$
and $\hat{\alloc}_{\cost}(\val) = \tilde{\alloc}_{\cost}(\val)$
for any $\val\geq \bar{\val}$. 
The third equality holds since $\virtual(\val) - \cost -\hat{\virtual}(\val)$ is non-negative, 
$\val_{\tilde{\alloc}}(\cost) \geq \val_{\hat{\alloc}}(\cost)$,
and $\virtual(\val) - \cost -\hat{\virtual}(\val) = 0$
for any value $\val\leq \val_{\tilde{\alloc}}(\cost)$. 
The last statement holds because under allocation~$\tilde{\alloc}$, 
we have $\cost'-\cost = \val_{\tilde{\alloc}}(\cost') - \val_{\tilde{\alloc}}(\cost)$
and hence $\virtual(\val) - \cost \leq \virtual(\tilde{\alloc}(\cost)) - \cost
\leq \virtual(\val_{\tilde{\alloc}}(\cost')) - \cost' = 0$
for any value $\val\leq \tilde{\alloc}(\cost)$
since the valuation distribution is MHR. 
The second inequality holds by applying Inequality \eqref{eq:rev increase}
since $\hat{\virtual}$ can be viewed as the virtual value function for distribution with decreasing marginal revenue, 
and the allocation rules are converted through the same format. 
Taking expectation over~$\cost$, \cref{thm:mhr} holds. 
\end{proof}

\thmCorrelate*
\begin{proof}
For any mechanism $\mech$, 
let $\val_0$ be the minimum value of any agent that participates the auction. 
Then agents of this type will be indifferent between participating and not participating, and hence have utility $u_0 = c_{v_0}$. 
Let $\alloc_0 = \frac{u_0}{v_0}$.

Now consider the following alternative revenue maximization problem for a single buyer.
This alternative problem will have the same distribution over values as the original setting, but all participation costs are set equal to~$0$.  Instead of participation costs, we impose three constraints on the class of mechanisms that can be used. 
First, the seller is constrained to only sell the item to agents with value above $\val_0$ subject to the incentive constraint. 
Second, the allocation returned by the mechanism is constrained to be at least~$\alloc_0$.
Third, the utility of an agent with value $\val_0$ is constrained to be exactly $u_0$. 

What is the revenue-optimal mechanism for this alternative problem?  Since there are no participation costs, 
a direct implication of \citet{myerson1981optimal}
is that the revenue optimal mechanism $\mech'$ in this alternative setting, subject to the allocation and utility constraints, is a step function. 
Specifically, since the minimum allocation is at least $\alloc_0$, 
this corresponds to a mechanism with menu size $2$, 
where one of the menu entries is $(x_0, 0)$
and the other is $(1, p)$ with $p\geq v_0-u_0$.
The utility function of mechanism $\mech'$ is illustrated in \Cref{fig:concave}\begin{figure}[t]
\begin{center}
\begin{tikzpicture}[scale = 0.55]

\draw (-0.2,0) -- (10.5, 0);
\draw (0, -0.2) -- (0, 6);

\begin{scope}[thick]
\draw [dashed] plot [smooth, tension=0.5] coordinates {(0, 0) (2, 2) (8, 2.5)}; 
\end{scope}

\draw plot [smooth, tension=0.6] coordinates {(0, 0) (2.5, 0.7) (6, 3) (8,5.5)}; 

\draw [red] (0, 0) -- (6.6, 3);
\draw [red] (6.6, 3) -- (8, 5.5);

\draw [dotted] (5.2, 0) -- (5.2, 2.4);
\draw (5.2, -0.7) node {$v_0$};

\draw [dotted] (0, 2.4) -- (5.2, 2.4);
\draw (-0.8, 2.4) node {$u_0$};

% \draw (0, 6.7) node {$x_c(v)$};
\draw (0, -0.7) node {$0$};

\end{tikzpicture}
\vspace{-6pt}
\caption{\label{fig:concave}
The black (respectively red) solid curve is the utility function of the agent (without paying the cost) given any mechanism $\mech$ (respectively $\mech'$),
and the black dashed curve is the participation cost function $\cost_{\val}$.
All three curves intersect at point $(\val_0, u_0)$.
}
\end{center}
\vspace{-15pt}
\end{figure}
 as the red solid line.
Note that since mechanism $\mech$ is also a feasible mechanism for this alternative revenue maximization problem, 
we must have $\Rev(\mech) \leq \Rev(\mech')$ (where revenue is calculated with respect to the new setting).

Now consider any mechanism that is feasible for the new setting.
Since the participation cost $\cost_{\val}$ is concave in $\val$, 
the utility of an agent participating in the mechanism with value $\val \geq \val_0$ is at least $\val\cdot\alloc_0\geq \cost_{\val}$.
Therefore, returning to the original problem formulation with participation costs, any agent with value $\val\geq \val_0$ (and hence participation cost $\cost_{\val}$) will choose participate in the auction in the original setting.  This means that for both $\mech$ and $\mech'$ (both of which are feasible in the new setting), the revenue remains unchanged when executing the mechanism in the original setting.
This implies that in the original setting, 
$\Rev(\mech) \leq \Rev(\mech')$.
Finally, since the payment for menu entry $(x_0, 0)$ is~$0$, removing this entry from the mechanism $\mech'$ weakly improves the expected revenue.  The resulting mechanism is a posted-price mechanism.
Hence, if the participation cost $\cost_{\val}$ is non-negative and concave in~$\val$, 
posted pricing is a revenue optimal mechanism.
\end{proof}

\section{Multi-buyer Setting}
\label{apx:multi-buyer}
\subsection{Non-Convexity}
\label{sub:nonconvex}
When there are multiple buyers, a common approach in mechanism design is to represent the mechanism by interim allocations and payments. 
In particular, let 
\begin{align*}
\alloc_i(\val_i, \cost_i) \triangleq \expect[\val_{-i},\cost_{-i}]{\alloc_i(\val, \cost)}
\text{ and }
\pay_i(\val_i, \cost_i) \triangleq \expect[\val_{-i},\cost_{-i}]{\pay_i(\val, \cost)}.
\end{align*}
In \citet{border1991implementation,che2013generalized}, 
the authors provide sufficient and necessary conditions on the set of interim allocations that are implementable. 
Then the revenue maximization problem can be formalized as the following optimization program. 
\begin{align*}
\max_{\alloc,\pay} \quad& \expect[\val,\cost]{\sum_i\pay_i(\val_i, \cost_i)}\\
\text{s.t.} \quad& 
\alloc_i(\val_i, \cost_i) \cdot \val_i - \pay_i(\val_i, \cost_i)
\geq \alloc_{i}(\val'_i, \cost'_i) \cdot \val_i - \pay_{i}(\val'_i, \cost'_i), 
\qquad\forall i,\val,\val',\cost,\cost'\\
& \alloc_i(\val_i, \cost_i) = \pay_i(\val_i, \cost_i) = 0
\text{ or } \alloc_i(\val_i, \cost_i) \cdot \val_i - \pay_i(\val_i, \cost_i)\geq \cost, 
\qquad\forall i,\val,\cost\\
&x \text{ is implementable according to \citet{border1991implementation}.}
\end{align*}
It is easy to see that the above optimization problem is not a convex program. 
A natural conjecture is that whether one could reformulate the problem such that it can be represented as a convex program. 
In the following example, we show that this is not the case. 

\begin{example}
There are two identical buyers. For each buyer, with probability 1, his value is~2 and his participation cost is 1. 
In this case, the optimal mechanism is to sell the item to buyer~1 with price 1 (or sell the item to buyer 2 with price 1) with expected revenue 1. 
Note that this mechanism is asymmetric. 
In fact, for any symmetric mechanism, the probability the item is sold to each buyer is at most $\sfrac{1}{2}$, 
and to satisfy the individual rationality constraint, 
the payment from each buyer is non-positive. 
Thus the revenue from the symmetric mechanism is at most $0$, 
which is smaller than the optimal revenue. 
\end{example}

Note that if the environment is symmetric and the problem can be represented as a convex program, 
there must exist a symmetric mechanism that is optimal.
However, the above example illustrates that the optimal mechanism is not symmetric for the multi-buyer setting in symmetric environments,
which rules out the possibility of restructuring the optimization program into a convex one. 
This observation illustrates a distinction between our model and other inter-dimensional problems, 
where the optimization program for the latter cases are often linear programs. 
Note that in general for non-convex programs, 
we cannot hope to derive succinct closed-form solutions or compute it in polynomial time. 
However, for the problem of revenue maximization for buyers with participation costs, 
we propose simple mechanisms that are approximately optimal under reasonable assumptions on the distributions. 

\subsection{Missing Proofs for Computation}
\label{sub:multi-proof}
Before the proof of \cref{lem:compute_approx}, we introduce the following two lemmas for bounding the discretization errors. 

\begin{restatable}{lemma}{lemfixed}\label{lem:ex-ante fixed}
For any product distribution $\jointF=\times_{i=n}\jointF_i$ with $\jointF_i$ supported on $[0,1]^2$, 
for any $\epsilon\in (0,1)$ and the corresponding discretized distribution $\jointF\dprimed$, 
for any profile of ex-ante probabilities $\{\quant_i\}_{i\in[n]}$ with $\sum_i\quant_i\leq 1$, 
there exists another profile of ex-ante probabilities $\{\quant\dprimed_i\}_{i\in[n]}$ in grid $\grid_0$
with $\sum_i\quant\dprimed_i\leq 1$
such that 
\begin{align*}
\sum_{i\in[n]}\optd_{\quant\dprimed_i}(\jointF\dprimed_i)
\geq \sum_{i\in[n]}\opt_{\quant_i}(\jointF_i) - 5n\epsilon.
\end{align*}
\end{restatable}
% \lemfixed*
\begin{proof}
Recall that $\jointF'$ is the distribution that rounding the values down to the discretization grid~$\grid_0$. 
Let $\mech,\mech'$ and $\mech\dprimed$ be the optimal mechanisms with allocation and payment rule 
$(\alloc,\pay), (\alloc',\pay')$ and $(\alloc\dprimed,\pay\dprimed)$
under distributions $\jointF_i,\jointF'_i$ and $\jointF\dprimed_i$. 

First note that $\opt_{\quant_i}(\jointF\dprimed_i) \geq \opt_{\quant_i}(\jointF'_i)$
since $\jointF\dprimed_i$ is constructed by decreasing the participation cost compared to $\jointF'_i$.
Next we bound the expected revenue loss between $\jointF_{i}$ and $\jointF'_{i}$
given any ex-ante constraint $\quant_i$. 
Consider a random boosting $z_j$ drawn from the distribution over value difference within in interval $[j\cdot \epsilon, (j+1)\epsilon)$ between distributions $\jointF$ and $\jointF'$
for all $j\leq \frac{1}{\epsilon}$.
Let $\mech_0$ be the mechanism that announces the realization of $z_j$ for all $j$, 
and then offer allocation $\alloc(j\cdot\epsilon^2+z_j)$ with payment $\pay(j\cdot\epsilon^2 + z_j) - \alloc(j\cdot\epsilon^2+z_j)\cdot z_i$
if the buyer reports value $j\cdot\epsilon^2$.%
\footnote{Mechanism $\mech$ may be a distribution over mechanisms, 
and in this case, mechanism $\mech_0$ is also a distribution over mechanisms by applying this procedure for each realization of the mechanisms in $\mech$.}
It is easy to verify that all values in the support of $\jointF'_i$ has incentives to report truthful in mechanism $\mech_0$, 
and the expected allocation of mechanism $\mech_0$ given $\jointF'_i$ coincide with the expected allocation of $\mech$ given $\jointF_i$. 
Thus,
\begin{align}
\opt_{\quant_i}(\jointF\dprimed_i) &\geq\opt_{\quant_i}(\jointF'_i) 
\geq \Rev(\jointF'_i; \mech_0)
\geq \Rev(\jointF_i; \mech) - \max_{j,z_j}\alloc(j\cdot\epsilon^2+z_j)\cdot z_j\nonumber\\
&\geq \Rev(\jointF_i; \mech) - \epsilon
= \opt_{\quant_i}(\jointF_i) - \epsilon.\label{eq:discrete dist}
\end{align}
Now consider another mechanism $\mech_1$ with parameter $j^*$ such that for any value below $j^*\cdot\epsilon^2$, allocation $\alloc\dprimed$ is rounded down to the multiples of $\epsilon$, 
and for any value above $j^*\cdot\epsilon^2$, allocation $\alloc\dprimed$ is round up to the multiples of $\epsilon$.
Allocation for value $j^*\cdot\epsilon^2$ is rounded randomly. 
Parameter~$j^*$ and the rounding probability at $j^*\cdot\epsilon^2$ is chosen such that the ex-ante feasibility is preserved. 
Note that in $\mech_1$, the realization of the random rounding is disclosed to the buyer. 
Let $\optt_{\quant_i}(\jointF\dprimed_i)$ be the optimal revenue for distribution $\jointF\dprimed_i$ given ex-ante constraint $\quant_i$
only using randomize over mechanisms with allocations on the discretized grid $\grid_0$. 
We have 
\begin{equation}\label{eq:discrete x}
\optt_{\quant_i}(\jointF\dprimed_i) \geq \Rev(\jointF\dprimed_i; \mech_1)
\geq \Rev(\jointF\dprimed_i; \mech\dprimed) - \epsilon
= \opt_{\quant_i}(\jointF\dprimed_i)  - \epsilon.
\end{equation}
Given a profile of ex-ante constraints
$\{\quant_i\}_{i\in[n]}$,
there exists a profile over random ex-ante constraints $\{\quant'_i\}_{i\in[n]}$
with corresponding fixed mechanisms $\mech_{i,\quant'_i}$ that are $\quant'_i$ feasible
such that $\expect[]{\quant'_i} = \quant_i$
and 
$\expect[\quant'_i]{\Rev(\jointF\dprimed_i;\mech_{i,\quant'_i})} = \optt_{\quant_i}(\jointF\dprimed_i)$.
This is done by essentially examining the ex-ante allocation probability of each realized mechanism for each buyer~$i$, 
and rename that realized ex-ante allocation probability as variable $\quant'$. 
Moreover, consider another profile of random ex-ante probabilities $\{\tilde{\quant}_i\}_{i\in[n]}$ by rounding each realization of $\quant'_i$ to the grid~$\grid_0$. 
We have $\sum_{i\in[n]} \expect[]{\tilde{\quant}_i} \leq \sum_{i\in[n]} \expect[]{\quant'_i} \leq 1$.
Note that a feasible mechanism given ex-ante constraint $\tilde{\quant}_i$
is to offer the mechanism $\mech_{i,\quant'_i}$ with probability $\frac{\tilde{\quant}_i}{\quant''_i}$
and offer the mechanism with constant zero allocation and payment
with probability $1-\frac{\tilde{\quant}_i}{\quant''_i}$
where $\quant''_i$ is rounding $\quant_i$ up to the multiples of $\epsilon$.
The buyer is informed about which mechanism is offered before participation. 
Note that in this construction, mechanism $\mech_{i,\quant'_i}$
is also $\quant''_i$ feasible. 
Therefore, given this randomized mechanism, the ex-ante sale probability is $\tilde{\quant}_i$ 
and the expected revenue is 
\begin{align}
\expect[\tilde{\quant}_i]{\optd_{\tilde{\quant}_i}(\jointF\dprimed_i)} 
&\geq \expect[\tilde{\quant}_i,\quant'_i,\quant''_i]{\frac{\tilde{\quant}_i}{\quant''_i}\cdot \Rev(\jointF\dprimed_i;\mech_{i,\quant'_i}) }\nonumber
\geq \expect[\quant'_i]{\Rev(\jointF\dprimed_i;\mech_{i,\quant'_i}) - 2\epsilon\cdot \Rev(\jointF\dprimed_i;\mech_{i,\quant'_i})}\\
&\geq \expect[\quant'_i]{\Rev(\jointF\dprimed_i;\mech_{i,\quant'_i})} - 2\epsilon 
= \optt_{\quant_i}(\jointF\dprimed_i) - 2\epsilon \label{eq:discrete q}
\end{align}
where the second inequality holds since $|\quant''_i-\tilde{\quant}_i|\leq 2\epsilon$
and the third inequality holds since $\Rev(\jointF\dprimed_i;\mech_{i,\quant'_i})\leq 1$.
Note that given the distribution over ex-ante probabilities $\{\tilde{\quant}_i\}_{i\in[n]}$, 
to improve the sum of ex-ante revenue, 
we can greedily select a deterministic profile of ex-ante probabilities $\{\tilde{\quant}\dprimed_i\}_{i\in[n]}$
by ranking the realizations according to the ratio of 
realized $\frac{\optd_{\tilde{\quant}_i}(\jointF\dprimed_i)}{\tilde{\quant}_i}$, 
with the exception that there may exist one buyer $i^*$ such that $\tilde{\quant}\dprimed_i$ is selected randomly over two possible realizations. 
Note that $\tilde{\quant}\dprimed_i\in\grid_0$ for any $i\neq i^*$
since $\tilde{\quant}_i$ only randomize over ex-ante probabilities in grid $\grid_0$.
Finally, by setting $\quant\dprimed_i = \tilde{\quant}\dprimed_i$ for any $i\neq i^*$, 
and letting $\quant\dprimed_{i^*}$ be the expected value of $\tilde{\quant}\dprimed_i$ round down to multiples of $\epsilon$, 
we have 
\begin{align*}
\sum_{i\in [n]} \optd_{\quant\dprimed_i}(\jointF\dprimed_i) 
&\geq \sum_{i\in [n]} \expect[\tilde{\quant}\dprimed_{i^*}]{\optd_{\tilde{\quant}\dprimed_i}(\jointF\dprimed_i) } - \epsilon
\geq \sum_{i\in [n]}\expect[\tilde{\quant}_i]{\optd_{\tilde{\quant}_i}(\jointF\dprimed_i)} - \epsilon\\
&\geq \sum_{i\in [n]}\optt_{\quant_i}(\jointF\dprimed_i) - 3n\epsilon
\geq \sum_{i\in [n]} \opt_{\quant_i}(\jointF\dprimed_i) -4n\epsilon
\geq \sum_{i\in [n]} \opt_{\quant_i}(\jointF_i) 
-5n\epsilon
\end{align*}
where the last three inequalities are implied by inequalities 
\eqref{eq:discrete dist}, \eqref{eq:discrete x} and \eqref{eq:discrete q}.
\end{proof}

\begin{restatable}{lemma}{lempolysingle}\label{lem:poly_single}
In the single-buyer setting, for any distribution $\jointF$ supported on $\grid_0\times\grid_1$ and any $\quant\in [0,1]$, 
there exists an algorithm with running time $\poly{\frac{1}{\epsilon}}$ that computes the mechanism with ex-ante sale probability at most $\quant$
that optimizes $\optd_{\quant}(\jointF\dprimed_i)$.
\end{restatable}
% \lempolysingle*
\begin{proof}
We first use dynamic program to compute the optimal revenue from fixed mechanisms for $\quant$ in grid $\grid_0$. 
For $i,j\leq \frac{1}{\epsilon}, k\leq \frac{1}{\epsilon^2}$ and $s\leq \frac{1}{\epsilon^3}$, let $R(i,j,k,s)$ be the optimal revenue from types with value below $i\cdot\epsilon$
when the allocation and utility of value $i\cdot\epsilon$ are $j\cdot \epsilon$ and $k\cdot \epsilon^2$ respectively, 
and the total ex-ante allocation probability for types at most $i\cdot\epsilon$ is at most $s\cdot\epsilon^3$.
The optimal revenue from ex-ante constraint $\quant$
is determined by the entry 
$\max_{j,k}R(\frac{1}{\epsilon},j,k,\frac{\quant}{\epsilon^3})$.

To simplify notation, let $p_i$ be the integer such that the probability of value $i\cdot\epsilon$ in distribution $\jointF$
is $p_i\cdot\epsilon$, 
and $z_{ij}$ be the integer such that conditional on value, 
the probability such that the participation cost is at most $j\cdot\epsilon^2$ is $z_{ij}\cdot\epsilon$.
We initialize the matrix by $R(1,j,k,s) = 0$ for any $k\leq j$
and $s\geq p_i\cdot j\cdot z_{1k}$,
and $R(1,j,k,s) = -\infty$ otherwise. 
For any $i\geq 2$, we have 
\begin{align*}
R(i,j,k,s) = \max_{j'\leq j} R(i-1,j',k-j,s-p_i\cdot j\cdot z_{ik}) 
+ q_i\cdot \jointF'_{i}(k\cdot \epsilon^2) \cdot 
(i\cdot j-k)\cdot \epsilon^2.
\end{align*}
To interpret this expression, $R(i-1,j',k-j,s-p_i\cdot j\cdot z_{ik})$ is the revenue from types with values at most $(i-1)\epsilon$. 
Note that the expected allocation from value $(i-1)\epsilon$ is $p_i\cdot j\cdot z_{ik}$
the the utility is $k\cdot \epsilon^2$, 
and hence the total ex-ante allocation from values at most $(i-1)\cdot\epsilon$
cannot exceed $s-p_i\cdot j\cdot z_{ik}$.
Moreover, the allocation of type with value $(i-1)\epsilon$ should not exceed the allocation of type with value $i\cdot\epsilon$, 
and hence the choice of $j'$ is at most $j$.
Finally, when the allocation and utility of type with value $i\cdot \epsilon$ are $j\cdot \epsilon$
and $k\cdot \epsilon^2$ respectively, 
by \cref{lem:revelation}, the utility of type with value $(i-1) \epsilon$ is exactly $(k-j)\epsilon^2$.
This utility is independent of the choice of $j'$. 

The second term is the expected revenue from types with value at most $i\cdot\epsilon$.
Note that 
$q_i\cdot \jointF'_{i}(k\cdot \epsilon^2)$ is the total probability of those types that have incentives to participate in the auction, 
and $(i\cdot j-k)\cdot \epsilon^2$ is the payment of those types, 
which is derived by subtracting the utility $k\cdot \epsilon^2$ from the expected value $i\cdot j\cdot \epsilon^2$,
if they choose to participate. 

Finally, given any $\quant\in[0,1]$, it is sufficient to brute-force search for all pairs of ex-ante probabilities in grid $\grid_0$
such that their convex combination coincides with $\quant$.
This operation takes at most $\frac{1}{\epsilon^2}$ comparisons.
\end{proof}

\lemAproxComp*
\begin{proof}
First we compute the mechanism $\mech_{i,\quant\dprimed_i}$ that generates revenue $\optd_{\quant\dprimed_i}(\jointF\dprimed_i)$ for any buyer $i$
and~$\quant\dprimed_i$ in grid $\grid_0$. 
By Lemma~\ref{lem:poly_single} this requires $\frac{n}{\hat{\epsilon}} \cdot \poly{\frac{1}{\hat{\epsilon}}}$ computations. 
For $i\in[n]$ and $j\in[\lfloor\frac{1}{\hat{\epsilon}}\rfloor]$,
let $\rev(i,j)$ be the optimal revenue from the first $i$ buyers with total allocation probability $j\cdot\hat{\epsilon}$. 
We initialize by $\rev(1,j) = \optd_{j\cdot\hat{\epsilon}}(\jointF\dprimed_1)$ for any $j\in[\lfloor\frac{1}{\hat{\epsilon}}\rfloor]$.
We set 
\begin{align*}
R(i,j) = \max_{j'\leq j} R(i-1,j') +
\optd_{(j-j')\cdot\hat{\epsilon}}(\jointF\dprimed_i), 
\quad \forall i\geq 2, j\in[\lfloor\textstyle{\frac{1}{\hat{\epsilon}}}\rfloor].
\end{align*}
The dynamic program takes at most $\frac{1}{\hat{\epsilon}^3}$ operations. 
Therefore, we compute the profile of ex-ante probabilities $\{\quant\dprimed_i\}_{i\in[n]}$ in time $\poly{\frac{n}{\hat{\epsilon}}}$.

Note that directly running mechanism $\mech_{i,\quant\dprimed_i}$ 
on the actual distribution $\jointF_i$ for buyer $i$, rather than the rounded distribution $\jointF\dprimed_i$,
may violate the desired ex-ante constraint $\quant\dprimed_i$.  This is because the fraction of agents who opt out of the mechanism may differ between $\quant\dprimed_i$ and $\jointF_i$, since our rounding of values can change the utility enjoyed by the agent by as much as $\hat{\epsilon}$.
However, since the density of the cost distribution is at most $\eta$, this change in utility can influence the probability the item is sold by at most $\eta \hat{\epsilon}$, and hence
the ex-ante probability the item is sold given $\jointF_i$ 
is at most 
$\quant\dprimed_i+\eta\hat{\epsilon}$.
Moreover, the revenue given $\jointF_i$ is 
\begin{equation*}
\Rev(\jointF_i; \mech_{i,\quant\dprimed_i}) \geq \optd_{\quant\dprimed_i}(\jointF\dprimed_i) - \hat{\epsilon}.\qedhere
\end{equation*}
\end{proof}

\subsection{Missing Proofs for Approximations of Pricing}
\label{sub:multi-pricing}
\lemDmrSupply*
\begin{proof}
For any mechanism with allocation rule $\alloc$ and sale probability $q\leq 1$, 
construct the posted price mechanism $\hat{\alloc}$ as in Equation \eqref{eq:alloc}. 
By \Cref{thm:dmr}, the revenue of the mechanism with allocation $\hat{\alloc}$ is weakly higher than $\alloc$. 
Next we consider two cases. 

If the probability the item is sold given allocation rule $\hat{\alloc}$ is at most $q$, 
then $\hat{\alloc}$ is a feasible posted price mechanism with higher revenue, 
which implies that \Cref{lem:dmr supply} holds. 

If the probability the item is sold given allocation rule $\hat{\alloc}$ is higher than $q$, 
then there exists an allocation rule $\tilde{\alloc}$ that posts (a possibly randomized) price higher than $\hat{\alloc}$
that sells the item with probability~$q$.
To prove \Cref{lem:dmr supply}, 
it is sufficient to show that the per-unit price (expected payment divided by expected allocation) 
charged in mechanism with allocation rule $\tilde{\alloc}$ is always higher than~$\alloc$
for any type of the buyer, 
since mechanism with allocation rule $\tilde{\alloc}$ sells the item with probability exactly $q$, 
while mechanism with allocation rule $\alloc$
sells the item with probability at most $q$. 

Note that by the definition of allocation rule $\tilde{\alloc}$, 
the per-unit price charged under allocation rule $\tilde{\alloc}$ is always higher than $\hat{\alloc}$, 
where the latter equals
$\int_0^{\bar{\val}}(1-\alloc(z)) \dd z$. 
For allocation rule $\alloc$, 
if a type with value $\val \leq \bar{\val}$ participates the auction, 
the per-unit price for this type is 
\begin{align*}
\frac{\val \alloc(\val) - \int_0^\val \alloc(z) \dd z}{\alloc(\val)}
\leq \val - \int_0^\val \alloc(z) \dd z
\leq \int_0^{\bar{\val}}(1-\alloc(z)) \dd z.
\end{align*}
Finally, for a type with value $\val > \bar{\val}$ participating the auction, 
the per-unit price for this type equals that for a type with value $\bar{\val}$, 
which is also upper bounded by $\int_0^{\bar{\val}}(1-\alloc(z)) \dd z$.
\end{proof}

\lemCorrelateSupply*
\begin{proof}
We first show that given any sale probability constraint $q$, the ex-ante optimal mechanism has menu complexity of $3$. 
Similar to the proof of \cref{thm:correlate menu}, 
for any ex-ante feasible mechanism $\mech$, 
let $\val_0$ be the minimum value of the agent that participates the auction with utility $u_0$, 
and let $\alloc_0 = \frac{u_0}{v_0}$.
Consider the following revenue maximization problem
without concerns for participation costs. 
The seller can only sell the item to agents with value above $\val_0$ subject to the incentive constraint
and the sale probability constraint $q$. 
In addition, the minimum allocation of the agent is~$\alloc_0$, 
and the utility of the agents with value $\val_0$ is $u_0$.
It is easy to verify that by \citet{alaei2013simple}, 
the revenue optimal mechanism $\mech'$ in this setting has at most two steps. 
Since the minimum allocation is at least $\alloc_0$, 
this corresponds to a mechanism with menu size $3$.
Moreover, one of the menu entries is $(x_0, 0)$. 
Note that this menu entry does not contribute to the expected revenue. 
Again similar to the proof of \cref{thm:correlate menu},
both $\mech$ and $\mech'$ have the same revenue across the two settings, 
and hence in the original setting, $\Rev(\mech)\leq \Rev(\mech')$.

Now we focus our attention on the original setting. 
In mechanism $\mech'$, for any menu entry $(\alloc,\pay)$
with $p>0$, 
let $q_x$ be the probability the agent chooses this menu entry in the ex-ante optimal mechanism. 
By definition we have $q_x < q$.
Thus by posting price $\pay$ to the agent, 
with probability $q_p \geq q_x$, 
the agent will accept the price and the expected revenue is at least $p \cdot q_p \geq p \cdot q_x$. 
If $q_p \leq q$, then posting price $p$ is feasible and generates higher revenue than the contribution of the menu entry $(\alloc,\pay)$ in the ex-ante optimal mechanism. 
If $q_p > q$, there exists a higher price $\hat{p} \geq p$ such that the item is sold with probability exactly $q$. 
In this case, the revenue from posting price $\hat{p}$
is $\hat{p}q \geq pq \geq p\cdot q_x$.
Therefore, by posting the optimal price such that the item is sold with probability at most~$q$, 
the revenue is a $2$-approximation to the ex-ante optimal.
\end{proof}

\section{Negative Participation Costs}
\label{apx:negative outside}
In this section, we show that posted pricing is approximately optimal when the participation cost can take negative value with positive probability.
Note that in this case, by \cref{lem:p0=0}, 
the parameter in the payment function satisfies $\pay_0\geq 0$. 

\begin{proposition}\label{lem:dmr supply negative}
For the single-buyer setting, 
if the conditional value distribution $\jointF_c$ has
identical and bounded support,
and has decreasing marginal revenue for any participation cost~$\cost$,
there exists a deterministic mechanism with at least half of the optimal revenue.
\end{proposition}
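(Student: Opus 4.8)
The plan is to reduce the negative-outside-option case to the non-negative case already handled by \cref{thm:dmr}, losing at most a factor of $2$ in the process. Recall that with possibly-negative outside options we must allow a participation fee, i.e., $p_0 \geq 0$, and by \cref{lem:p0=0} it is without loss to restrict attention to payment rules of the form $\pay(\val) = \val\alloc(\val) - \int_0^\val \alloc(z)\dd z + p_0$ with $p_0 \geq 0$. Using \cref{thm:revenue equivalence} with marginal outside-option distribution $G$, the revenue of such a mechanism decomposes as an expectation over $\cost \sim G$ of a term that, for each $\cost$, equals $\expect[\val\sim\jointF_c]{\alloc_\cost(\val)\virtual_c(\val)} - (1-\jointF_c(\val_\alloc(\cost)))\cdot\max\{-p_0,\cost\}$. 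The key observation is that this splits naturally according to the sign of $\cost + p_0$: for $\cost \geq -p_0$ the per-$\cost$ revenue is exactly of the form analyzed in \cref{thm:dmr} (with the cutoff $\val_\alloc(\cost)$ strictly inside the support), while for $\cost < -p_0$ we have $\val_\alloc(\cost) = 0$, the buyer always participates, and the per-$\cost$ revenue is $\expect[\val\sim\jointF_c]{\alloc_\cost(\val)\virtual_c(\val)} + p_0 = \expect[\val\sim\jointF_c]{\alloc(\val)\virtual_c(\val)} + p_0$.

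Concretely, I would define two deterministic candidate mechanisms. The first is the posted-price mechanism obtained by applying the construction of \cref{thm:dmr} (Equation~\eqref{eq:alloc}) to the optimal allocation rule $\alloc^*$, but keeping $p_0 = 0$; call its revenue $\rev_1$. By the per-$\cost$ inequality \eqref{eq:rev increase} (which holds whenever $\cost + p_0 \geq 0$; with $p_0=0$ this is just $\cost \geq 0$, i.e., for all $\cost$ in the positive part of the support), this mechanism dominates $\alloc^*$ restricted to the positive outside-option regime. The second candidate is the deterministic mechanism that simply always sells the item at the monopoly price of the (worst-case over $\cost<0$) item-value distribution, together with a participation fee $p_0$ set to extract the negative-outside-option utility; more carefully, since for $\cost < -p_0$ the contribution is $\expect{\alloc(\val)\virtual_c(\val)} + p_0$ and the allocation rule there is unconstrained by any participation cutoff, the optimal choice restricted to this regime is a standard Myerson problem whose optimum is a posted price, plus the fee $p_0$. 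Call its revenue $\rev_2$. Since $\opt(\jointF)$ is the sum of the positive-regime contribution and the negative-regime contribution of the single optimal mechanism, and each of these is bounded above by $\rev_1$ and $\rev_2$ respectively (the positive-regime part by \eqref{eq:rev increase} and \cref{lem:monotone}; the negative-regime part because posting a price is Myerson-optimal for the unconstrained subproblem and the fee term is identical), we get $\opt(\jointF) \leq \rev_1 + \rev_2 \leq 2\max\{\rev_1,\rev_2\}$. Taking the better of the two deterministic mechanisms yields revenue at least $\opt(\jointF)/2$.

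The main obstacle I anticipate is making the decomposition of $\opt(\jointF)$ genuinely clean: the optimal mechanism uses a single allocation rule $\alloc^*$ and a single $p_0^* \geq 0$ across \emph{all} outside-option values, so the ``positive regime'' ($\cost \geq -p_0^*$) and ``negative regime'' ($\cost < -p_0^*$) are coupled through $\alloc^*$, and one must argue that replacing $\alloc^*$ by the \cref{thm:dmr}-style step allocation in the first candidate does not interact badly with the cutoff points $\val_{\alloc^*}(c_j)$ — but this is exactly what inequality \eqref{eq:rev increase} buys us pointwise in $\cost$, so restricting that argument to the sub-expectation over $\{\cost \geq -p_0^*\}$ (rather than all $\cost \geq 0$) goes through verbatim. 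The second, more delicate point is that in the negative regime the ``revenue'' term $\expect{\alloc(\val)\virtual_c(\val)} + p_0$ can in principle be made large by a clever non-monotone-looking allocation, but since $\alloc$ must still be a valid (monotone, $[0,1]$-valued) allocation rule and the cutoff is $0$, this is precisely a Myerson single-item problem, for which posting a price is optimal; the fee $p_0$ passes through unchanged because in this regime every type pays it. Assembling these two bounds and taking the max gives the claimed $2$-approximation by a deterministic (price-plus-participation-fee) mechanism.
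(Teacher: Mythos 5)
Your two-candidate, factor-$2$ plan has the right shape (it is also how the paper proceeds), but your decomposition has a genuine gap in the regime $-p_0^* < \cost < 0$, which you assign to the ``positive regime'' and claim is handled by the fee-free posted price because \eqref{eq:rev increase} ``goes through verbatim.'' It does not: the first inequality in \eqref{eq:rev increase} uses $\cost\ge 0$ to replace $\alloc_{\cost}(\val)\virtual_{\cost}(\val)-\cost$ by $\alloc_{\cost}(\val)(\virtual_{\cost}(\val)-\cost)$, and \cref{lem:monotone} is only proved for $\cost\ge 0$ (its first case breaks when $\cost<0$), so the pointwise dominance does not extend below zero. Worse, for $\cost\in(-p_0^*,0)$ the optimal mechanism's per-$\cost$ revenue contains the fee-extraction term $-(1-\jointF_c(\val_{\alloc^*}(\cost)))\cdot\cost$, which can be of order $p_0^*$ per participating type; your candidate~1 has $p_0=0$ and collects none of it, so $\rev_1$ need not dominate your ``positive-regime'' contribution at all. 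Concrete failure: let $\val\sim U[1,2]$ (DMR) and $\cost=-M$ or $0$ with probability $\tfrac12$ each. For large $M$ the optimal revenue is $(M+1)/2$, any optimal mechanism must have $p_0^*\ge M$, and essentially all revenue comes from the $\cost=-M$ types, which satisfy $\cost\ge -p_0^*$ and hence lie in your positive regime; yet any posted price without a fee earns at most $2$, and your negative regime $\cost<-p_0^*$ is empty, so the claimed chain $\opt\le\rev_1+\rev_2$ does not follow from your two bounds.

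The paper repairs exactly this point by keeping the \emph{same} fee $p_0$ in the step-allocation benchmark of \cref{thm:dmr}: for $\cost\ge 0$ it applies \eqref{eq:rev increase} with the common $p_0$; for $\cost\le -p_0$ the cutoff is $0$ and the constant $+p_0$ passes through; and for $-p_0<\cost<0$ it applies \eqref{eq:rev increase} only to the virtual-surplus part, carrying the leftover term $-(1-\jointF_c(\val_\alloc(\cost)))\cost\le p_0\,(1-\jointF_c(\val_\alloc(\cost)))$ separately. Summed over $\cost$, this leftover is at most $p_0\cdot q$ with $q$ the participation probability of the original mechanism, and it is covered by a second deterministic mechanism that simply posts price $p_0$; taking the better of ``price plus participation fee $p_0$'' and ``price $p_0$'' gives the factor $2$. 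If you want to salvage your version, you must (i) reassign the fee revenue from $\cost\in(-p_0^*,0)$ to a candidate that actually charges a fee or a price of order $p_0^*$, and (ii) justify that a single price suffices across the varying conditionals $\jointF_c$ in your negative regime (this does hold, since $f_c\virtual_c$ is linear in $(f_c,F_c)$ so the relevant mixture is again DMR), a step your sketch currently glosses over.
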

\begin{proof}
Let $H<\infty$ be the maximum value of the buyer. 
For any allocation rule $\alloc$ 
and associated payment rule with parameter $p_0\geq 0$,
let $q$ be the probability the agent participates the auction
given allocation and payment $\alloc$ and $\pay$.
Let $\bar{\val} = \sup_{\val\leq H} \{\alloc(\val) < 1\}$,
$\mu = \int_{0}^{\bar{\val}} (1-\alloc(z)) \dd z$,
and let 
\begin{align*}%\label{eq:vhat}
\hat{\alloc}(\val) = \begin{cases}
1 & \val \geq \mu\\
0 & \val < \mu.
\end{cases}
\end{align*}
For any participation cost $\cost$, 
it is easy to verify that 
\begin{align*}
\hat{\alloc}_{\cost}(\val) = \begin{cases}
1 & \val \geq \max\{\val_{\cost}(\hat{\alloc}),\mu\}\\
0 & \val < \max\{\val_{\cost}(\hat{\alloc}),\mu\},
\end{cases}
\end{align*}
where $\val_{\cost}(\hat{\alloc}) = 0$
if $\cost \leq - p_0$
and $\val_{\cost}(\hat{\alloc}) = 
\bar{\val} - \mu + \cost + p_0$
if $\cost > - p_0$.
Moreover, $\int_0^{\vupper} \alloc(z) \dd z = \int_0^{\vupper} \hat{\alloc}(z) \dd z$ 
and $\int_0^{\val} \alloc(z) \dd z \geq \int_0^{\val} \hat{\alloc}(z) \dd z$ for any $\val\geq 0$. 
By Inequality \eqref{eq:rev increase}, 
for any $\cost\geq 0$,
we have 
$\rev(\alloc,p_0; \cost) \leq \rev(\hat{\alloc},p_0; \cost)$. 
Next we consider the case that $\cost < 0$. 
For the case that $-p_0<\cost < 0$, 
the revenue of mechanism with parameter $p_0$ is
\begin{align*}
\rev(\alloc, p_0; \cost) 
&= 
\int_{\val_{\alloc}(\cost)}^{\vupper} \jointf_c(\val)\alloc_{\cost}(\val)\virtual_c(\val) \dd\val
- (1-\jointF_c(\val_{\alloc}(\cost)))\cdot \cost \\
&\leq \int_{\val_{\hat{\alloc}}(\cost)}^{\vupper} \jointf_c(\val)\hat{\alloc}_{\cost}(\val)\virtual_c(\val) \dd\val
- (1-\jointF_c(\val_{\alloc}(\cost)))\cdot \cost \\
&\leq \int_{\val_{\hat{\alloc}}(\cost)}^{\vupper} \jointf_c(\val)\hat{\alloc}_{\cost}(\val)(\virtual_c(\val) - \cost) \dd\val
- (1-\jointF_c(\val_{\alloc}(\cost)))\cdot \cost \\
&= \rev(\hat{\alloc}, p_0; \cost) - (1-\jointF_c(\val_{\alloc}(\cost)))\cdot \cost .
\end{align*}
The first inequality holds by applying Inequality \eqref{eq:rev increase}, 
and the second inequality holds since $c < 0$.
Finally we consider the case $c \leq -p_0$.
Here the buyer will always participate the auction for all values, $\val_{\alloc}(\cost) = \val_{\hat{\alloc}}(\cost) = 0$,
and 
\begin{align*}
\rev(\alloc, p_0; \cost) 
&= 
\int_{0}^{\vupper} \jointf_c(\val)\alloc_{\cost}(\val)\virtual_c(\val) \dd\val
+ p_0 \\
&\leq \int_{0}^{\vupper} \jointf_c(\val)\hat{\alloc}_{\cost}(\val)\virtual_c(\val) \dd\val
+ p_0 
= \rev(\hat{\alloc}, p_0; \cost),
\end{align*}
and the inequality holds again by applying Inequality \eqref{eq:rev increase}.
Combining three cases and taking expectation over $\cost$, 
we have 
\begin{align*}
\rev(\alloc, p_0) 
&= \expect[\cost\sim G]{\rev(\alloc, p_0; \cost)}\\
&\leq \expect[\cost\sim G]{\rev(\hat{\alloc}, p_0; \cost)}
- \expect[\cost\sim G]{(1-\jointF_c(\val_{\alloc}(\cost)))\cdot \cost \cdot \indicate{-p_0<\cost < 0}}\\
&\leq \rev(\hat{\alloc}, p_0)
+ \expect[\cost\sim G]{(1-\jointF_c(\val_{\alloc}(\cost)))\cdot p_0}.
\end{align*}
Note that the term of $\expect[\cost\sim G]{(1-\jointF_c(\val_{\alloc}(\cost)))\cdot p_0} = p_0\cdot q$.
Moreover, there exists a mechanism that
charges price $p_0$ for the item, 
and buyer participates in it with probability at least $q$.
The revenue of this posted price mechanism is at least $p_0\cdot q$.
Thus the revenue of allocation $\alloc$ with with parameter $p_0\geq 0$
is upper bounded by twice of the optimal revenue of posted pricing.
\end{proof}

\section{Hoeffding's inequality}
\begin{lemma}[Hoeffding's inequality]\label{lem:Hoeffding}
For any Bernoulli random variable with probability $p$ for value~$1$,
let $H(n)$ be the number of $1$ values given $n$ trials. 
For any $\epsilon > 0$, we have 
\begin{align*}
\Pr\left[ \left|\frac{1}{n}H(n) - p\right| \geq \epsilon \right] \leq 2\exp(-2\epsilon^2 n).
\end{align*}
\end{lemma}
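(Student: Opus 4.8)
The plan is to use the standard exponential-moment (Chernoff) method, with Hoeffding's lemma on moment generating functions of bounded random variables as the key ingredient. Write $H(n)=\sum_{i=1}^{n} X_i$, where $X_1,\dots,X_n$ are i.i.d.\ Bernoulli random variables with $\Pr[X_i=1]=p$, and put $Y_i = X_i - p$, so that $\mathbb{E}[Y_i]=0$ and $Y_i\in[-p,\,1-p]$ almost surely. First I would establish the one-sided bound $\Pr\!\left[\tfrac{1}{n}H(n)-p\geq\epsilon\right]\leq e^{-2\epsilon^2 n}$. The lower-tail bound $\Pr\!\left[\tfrac{1}{n}H(n)-p\leq-\epsilon\right]\leq e^{-2\epsilon^2 n}$ then follows by applying the same argument to $1-X_i$ (which is Bernoulli with parameter $1-p$), and a union bound over the two tails gives the factor $2$ in the claimed inequality.

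For the upper tail, fix $\lambda>0$. Markov's inequality applied to the nonnegative random variable $\exp\!\left(\lambda\sum_i Y_i\right)$ gives
\begin{align*}
\Pr\!\left[\sum_{i=1}^{n} Y_i\geq n\epsilon\right]\;\leq\; e^{-\lambda n\epsilon}\,\mathbb{E}\!\left[e^{\lambda\sum_{i=1}^{n}Y_i}\right]\;=\;e^{-\lambda n\epsilon}\left(\mathbb{E}\!\left[e^{\lambda Y_1}\right]\right)^{n},
\end{align*}
where the last equality uses independence. Hoeffding's lemma asserts that a mean-zero random variable supported on an interval of length $\ell$ satisfies $\mathbb{E}[e^{\lambda Y}]\leq e^{\lambda^2\ell^2/8}$; here $\ell=(1-p)-(-p)=1$, so $\mathbb{E}[e^{\lambda Y_1}]\leq e^{\lambda^2/8}$ and hence $\Pr\!\left[\sum_i Y_i\geq n\epsilon\right]\leq e^{-\lambda n\epsilon+n\lambda^2/8}$. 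Minimizing the exponent over $\lambda>0$, whose optimum is $\lambda=4\epsilon$, yields the bound $e^{-2n\epsilon^2}$; combined with the symmetric lower-tail estimate this completes the proof.

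The only step that is not a routine calculation, and the part I expect to be the main obstacle, is Hoeffding's lemma. I would prove it as follows. Let $Y\in[a,b]$ with $\mathbb{E}[Y]=0$, so $a\leq 0\leq b$. By convexity of $t\mapsto e^{\lambda t}$ on $[a,b]$ we have $e^{\lambda t}\leq\tfrac{b-t}{b-a}e^{\lambda a}+\tfrac{t-a}{b-a}e^{\lambda b}$ for $t\in[a,b]$; taking expectations and using $\mathbb{E}[Y]=0$ gives $\mathbb{E}[e^{\lambda Y}]\leq e^{\psi(\lambda)}$, where, with $c=b-a$ and $p'=\tfrac{-a}{b-a}\in[0,1]$, one has $\psi(\lambda)=\lambda a+\log\!\left(1-p'+p'e^{\lambda c}\right)$. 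A short computation shows $\psi(0)=\psi'(0)=0$ and $\psi''(\lambda)=c^2\,u(\lambda)\bigl(1-u(\lambda)\bigr)$ with $u(\lambda)=\tfrac{p'e^{\lambda c}}{1-p'+p'e^{\lambda c}}\in(0,1)$, so $\psi''(\lambda)\leq c^2/4=(b-a)^2/4$ for all $\lambda$. Taylor's theorem with Lagrange remainder then gives $\psi(\lambda)\leq\lambda^2(b-a)^2/8$, which is exactly Hoeffding's lemma. All remaining ingredients --- the Chernoff step, the factorization over independent coordinates, the single-variable optimization, and the symmetrization between the two tails --- are standard.
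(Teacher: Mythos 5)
Your proof is correct: the Chernoff exponential-moment step, the bound $\mathbb{E}[e^{\lambda Y_1}]\leq e^{\lambda^2/8}$ via Hoeffding's lemma (including your self-contained derivation with $\psi''\leq (b-a)^2/4$ and Taylor's theorem), the optimization at $\lambda=4\epsilon$, and the symmetrization plus union bound all check out and give exactly the stated bound $2\exp(-2\epsilon^2 n)$. The paper does not prove this lemma at all---it simply quotes Hoeffding's inequality as a standard result for use in the sample-complexity argument---and your argument is precisely the canonical textbook proof, so there is nothing to reconcile.
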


\end{document}